\newcommand{\dint}{\,\mathrm{d}}
\newcommand{\Rn}{\mathbb{R}}
\newcommand{\Nn}{\mathbb{N}}
\newcommand{\Cn}{\mathbb{C}}
\newcommand{\s}{\mathfrak{s}}
\newcommand{\B}{\mathcal{B}}
\newcommand{\Lp}{\mathnormal{L}}
\newcommand{\Nat}{\mathbb{N}}
\newcommand{\op}[1]{\mathrm{#1}}
\newcommand{\eps}{\varepsilon}
\DeclareMathOperator{\sign}{sign}
\begin{document}
\title{Nonlocal actin orientation models select for a unique orientation
  pattern.}
\author{Daniel B. Smith\footnotemark[2]\
  \and Jian Liu \footnotemark[2]} 
\date{\today}
\maketitle

\renewcommand{\thefootnote}{\fnsymbol{footnote}}

\footnotetext[2]{National Heart, Lung and Blood Institute, National Institutes
  of Health, Bethesda, MD, USA}

\renewcommand{\thefootnote}{\arabic{footnote}}

\begin{abstract}
  Many models have been developed to study the role of branching actin networks
  in motility. One important component of those models is the distribution of
  filament orientations relative to the cell membrane. Two mean-field models
  previously proposed are generalized and analyzed. In particular, we find that
  both models uniquely select for a dominant orientation pattern. In the linear
  case, the pattern is the eigenfunction associated with the principal
  eigenvalue. In the nonlinear case, we show there exists a unique equilibrium
  and that the equilibrium is locally stable. Approximate techniques are then
  used to provide evidence for global stability.
\end{abstract}

\section{Introduction}

Actin is a protein involved in many cellular processes ranging from regulating
gene transcription to acting as a motor in cell motility
\cite{Dominguez2011}. It is one of the most conserved proteins and is present
in almost all eukaryotic cells. Actin monomers polymerize into thin filaments
which form highly branched networks near the leading edge of motile cells
\cite{Mullins1998}. While actin monomers will spontaneously polymerize in
physiological conditions, inside these branched networks, new filaments are
generated by branching off of existing filaments \cite{Pollard2000a}. New
filaments are nucleated by the actin related proteins 2 and 3 complex (Arp2/3).
To maintain a consistent supply of actin monomers, actin filaments are
eventually severed and depolymerized. Filament density is regulated by capping
protein binding to the filament tips, ceasing polymerization
\cite{Carlsson2010}. Combined with filaments growing by the addition of new
monomers, these processes create a dynamic network that serves as the engine in
certain types of cell motility \cite{Pollard2000a, Rafelski2004}.

Any individual filament in an actin network can be partially characterized by
the angle between it and the normal direction of the membrane. One obvious
question is whether or not these angles form any regular pattern. While the
question has not been extensively studied experimentally, there is some
evidence that the networks indeed organize into regular patterns relative to
the cell membrane \cite{Maly2001, Weichsel2012, Schaub2007a, Verkhovsky2003}. A
few models have been proposed to explain the existence of such patterns
\cite{Maly2001, Schaus2007, Weichsel2010}. While these models have been
numerically studied, there has been no rigorous work proving the existence,
uniqueness or stability of these solutions. This article presents a few results
that characterize the solutions to two equations modeling the angular density
of branching actin networks.

All of the models proposed to explain the orientation distribution have used a
continuum approximation. There is some question as to whether or not ignoring
the stochastic fluctuations of actin networks is justified
\cite{Rafelski2004}. However, none of the models make specific predictions
about the kinetics of network organization, and there is currently no evidence
that correlations between filaments lead to changes in the equilibrium
orientation pattern. For the rest of this article, we will assume the
approximation is justified and focus on long-time equilibrium behavior.

Some of the first few models to study orientation patterns in actin and similar
networks studied the existence and persistence of peaks in the orientation
pattern \cite{Geigant1998, Mogilner1995, Mogilner1996a}. The analysis was based
on Fourier series and small perturbations which greatly limited their
generality. Their analysis led to the qualitative result that peaked
orientation patterns are likely to be observed. Similar methods have been used
on models of orientation and space \cite{Edelstein-Keshet1989,
  Bressloff2004}. Stability analysis has also been done on similar models,
termed ``ring models'' in the neuroscience literature \cite{Ben-Yishai1995,
  Zhang1996}.

The first model we consider here was proposed by Maly and Borisy
\cite{Maly2001}. Their insight was that if filaments were capped at different
rates based on the filament orientation, filament branching and capping could
generate stable orientation patterns. The model they proposed takes into
account branching and capping explicitly and filament growth implicitly. New
filaments branch off of existing filaments at a characteristic angle
$\sim70^\circ$ with some variance around that. We can write out a branching
kernel as a probability of a mother filament with angle $\theta_M$ having a
branched daughter filament with angle $\theta_D$:
\begin{equation}
  \B(\phi)=\mathrm{P}(\theta_D=\theta_M-\phi)
\end{equation}
  Adding up
the contribution of all filaments with density $u(\phi)$ gives the total
branching rate at angle $\theta$:
\begin{equation}
  \text{BR}(\theta)\propto\int\B(\theta-\phi)u(\phi)\dint\phi
\end{equation}
They also proposed that the capping rate was proportional to the amount of time
the filament would be not in contact with the leading edge, much like
\cite{Mogilner2003b}, but used the capping function $\frac{1}{\cos(\theta)}$.
The Maly and Borisy model only considered filaments growing faster than the
leading edge, i.e. filaments with orientation
$|\theta|\leq\theta_{\text{crit}}=\arccos\left[\frac{v}{v_{max}}\right]$ where
$\frac{v}{v_{max}}$ is the velocity of the leading edge relative to the maximum
velocity of filament growth. Combining the two terms gives the full equation:
\begin{equation}\label{Maly}
  \dot{u}(\theta,t)=\lambda\int\limits_{-\theta_{\text{crit}}}^{\theta_{\text{crit}}}
  \B(\theta-\phi)u(\phi)\dint\phi-\frac{u(\theta,t)}{\cos(\theta)}
\end{equation}
where $\dot{u}$ indicates the time derivative. The equation is defined on
$(-\theta_{\text{crit}},\theta_{\text{crit}})\times\Rn^+$ with absorbing
boundary conditions.

Maly and Borisy performed two analyses on \eqref{Maly}. The first analysis was
to approixmate solutions of \eqref{Maly} by solving the equation for two points
in orientation space. Solutions to the two-point approximation supported the
argument that the equation selected for a unique orientation `type' that grew
exponentially at the fastest rate. The second analysis was to use numerical
quadrature \cite{Baker1977} to approximate the eigenfunctions of the right-hand
side of \eqref{Maly}. However, the existence and uniqueness of the
eigenfunction solutions were never rigorously shown. They explained their
results by using an evolutionary selection metaphor. In this article, we show
that a version of \eqref{Maly} with stricter hypotheses on the capping rate
uniquely selects for a most 'fit' orientation pattern with a fitness function
defined on the unit ball of orientation functions.

A very similar model was proposed by Weichsel and Schwartz \cite{Weichsel2010}
to explain both the orientation patterns and the velocity of a growing actin
network pushing against a given force. There were two primary differences
between their model and the Maly and Borisy model. First, orientations were
defined on the entire circle, $S^1$. The second difference was to normalize the
total branching rate to the constant $\lambda$, which ensures that solutions
have bounded total density. The Weichsel and Schwartz model is:
\begin{equation}\label{Weich}
  \dot{u}(\theta,t)=\frac{\lambda}{\int_{S^1}u(\phi,t)\dint\phi}
  \int\limits_{S^1}\B(\theta-\phi)u(\phi,t)\dint\phi-\kappa(\theta)u(\theta,t)
\end{equation}
where the capping rate is a constant plus a term proportional to the difference
between the velocity of the leading edge and a filament with a given
orientation:
\begin{equation}
  \kappa(\theta)=k+c\left(v_{LE}-v_0\cos(\theta)\right)^+
\end{equation}
where $v_{LE}$ is the velocity of the leading edge, $v_0$ is the rate of
filament growth, and $(x)^+$ is the positive part of $x$.

Weichsel and Schwartz performed the same two analyses as in the Maly and Borisy
paper. They found that, for certain parameters, there were two equilibria in
the two-point approximation to \eqref{Weich}, where one equilibrium is stable
and the other is a saddle. Finally, they used numerical techniques to calculate
the equilibrium distributions. The results in this article explicitly
contradict their assertion of multiple equilibria, but they show the local
stability of a unique, positive equilibrium. However, it is important to note
that the equilibrium is unique for a fixed $\B$ and $\kappa$. Changing the load
force, concentrations of branching and capping proteins, or other experimental
manipulations could change the structure of the unique equilibrium.

The core tool used in this paper to prove the existence and uniqueness of a
principal eigenvalue is the Krein-Rutman theorem. The theorem is one of the key
tools in studying transfer and diffusion operators with applications in biology
\cite{Perthame2006}, physics \cite{Drmac2003}, and materials science
\cite{Capdeboscq2002}. The work presented here is relatively novel in that we
show the equivalence of the spectrum between our operator of interest and a
positive operator before using the Krein-Rutman theorem on the positive
operator.

\subsection{Definitions and Assumptions}

The two equations we specifically analyze are:
\begin{align}\label{first-dyna}
  \dot{u}(\theta,t)&=\lambda\Big(\B\star u\Big)(\theta,t)
  -\kappa(\theta)u(\theta,t)\\\label{zero-dyna}
  \dot{u}(\theta,t)&=\frac{b}{\int u(\omega,t)\dint\omega}
  \Big(\B\star u\Big)(\theta,t)-\kappa(\theta)u(\theta,t)
\end{align}
Equation \ref{first-dyna} is our generalization of the Maly and Borisy
\cite{Maly2001} model, and equation \ref{zero-dyna} is our generalization of
the Weichsel and Schwarz \cite{Weichsel2010} model. Both equations are defined
on the circle $S^1$ with $\B\geq0$ being the branching kernel which generates
new filaments and $\kappa(\theta)>0$ being the variable capping rate which
eliminates filaments. The hypotheses on each function are relatively weak:

\begin{enumerate}
\item $\B$ is real, positive, symmetric $C^2$ function with $\|B\|_1=1$.
\item $\kappa$ is a real, strictly positive, symmetric $C^2$ function
\end{enumerate}

The assumptions are likely stronger than necessary, but generalizing the
problem is a question for further study. They also do not exactly capture the
dynamics for either paper. The first paper, by Maly and Borisy \cite{Maly2001},
would require an infinite capping rate.  However, as that is likely unphysical,
the equations at hand should be sufficient. For Weichsel and Schwarz
\cite{Weichsel2010}, the capping rate they used is continuous but not
differentiable. The primary role of the $C^2$ hypothesis on $\B$ is to ensure
compactness, and weaker hypotheses should be quite feasible. Likewise, the
$C^2$ hypothesis is stronger than necessary, but it simplifies the
presentation. The smoothness hypotheses are merely technical and should have no
effect on the interpretation of the results presented here.

In agreement with the paper \cite{Maly2001}, the first-order branching equation
uniquely selects for an optimal orientation pattern. However, the Weichsel and
Schwarz paper \cite{Weichsel2010} suggests that there might be multiple
equilibria. We show that the zeroth-order branching equation also uniquely
selects for a unique equilibrium orientation pattern.

The first two results characterize solutions to the first-order branching
equation \eqref{first-dyna}. Theorem \ref{principal} shows that the spectrum of
the operator defining \eqref{first-dyna} is dominated by an isolated, simple
principal eigenvalue with strictly positive eigenfunction. While that
eigenvalue may be positive or negative in general, long-time solutions to
\eqref{first-dyna} are therefore dominated by the exponential increase or decay
of the principal eigenfunction. Proposition \ref{small-prop} says that for
given $\B$ and $\kappa$, there exists only one $\lambda$ such that
\eqref{first-dyna} has a non-trivial equilibrium.

The rest of the article is dedicated to analyzing
\eqref{zero-dyna}. Proposition \ref{exist-prop} gives the existence and
uniqueness of a non-trivial equilibrium. Linear stability analysis combined
with Theorem \ref{zero-stable} implies that the equilibrium is locally stable.
Finally, numerical simulations and a perturbation analysis are performed to
provide evidence that \eqref{zero-dyna} is globally stable.

\section{First-order Branching}

The first result uniquely characterizes the dynamics of
\eqref{first-dyna}. Define $\op{A}$ to be the linear operator on the right-hand
side of \eqref{first-dyna}:

\begin{equation}
  \op{A} = \lambda\Big(\B\star u\Big)(\theta,t)-\kappa(\theta)u(\theta,t)
\end{equation}

For the sake of brevity, we will forego much discussion of the existence and
uniqueness of solutions to equation \eqref{first-dyna}. It is known that a
densely-defined resolvent positive operator fulfills the Hille-Yosida
conditions, which ensures unique, positive solutions \cite{Arendt1987}. We will
sketch a quick lemma showing that $\op{A}$ is resolvent positive as it is
illustrative of future techniques:

\begin{lemma}
  $\op{A}$ is a resolvent positive operator, i.e. there exists $\gamma_0\in\Rn$
  such that for all $\gamma\in\Rn$ where $\gamma > \gamma_0$:
  \begin{equation}
    (\gamma - \op{A})^{-1} \geq 0
  \end{equation}
\end{lemma}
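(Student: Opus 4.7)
The plan is to decompose $\op{A}$ into a positive part and a negative-definite multiplication part, then realize the resolvent as a Neumann series whose terms are all positive. Write $\op{A} = \op{K} - \op{M}$, where $\op{K}u := \lambda(\B\star u)$ and $\op{M}u := \kappa(\theta)\,u(\theta)$. Since $\B \geq 0$ and $\lambda > 0$, convolution against $\B$ is a positivity-preserving operator on any of the natural function spaces ($C(S^1)$, $L^p(S^1)$); since $\kappa > 0$ is continuous on the compact circle, $\op{M}$ is bounded with $\kappa_{\min} := \min_\theta \kappa(\theta) > 0$.

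The key observation is that $\gamma + \op{M}$ is just multiplication by the strictly positive function $\gamma + \kappa(\theta)$, so it is boundedly invertible for every $\gamma > -\kappa_{\min}$, and its inverse — multiplication by $1/(\gamma+\kappa(\theta))$ — is manifestly a positive operator. I would then factor
\begin{equation}
  \gamma - \op{A} \;=\; (\gamma + \op{M}) - \op{K} \;=\; (\gamma + \op{M})\bigl(I - (\gamma + \op{M})^{-1}\op{K}\bigr).
\end{equation}
Setting $T_\gamma := (\gamma + \op{M})^{-1}\op{K}$, both factors are products of positive operators, so $T_\gamma$ is positive. Since $\|\B\|_1 = 1$, the convolution $\op{K}$ has norm at most $\lambda$ on $L^\infty(S^1)$ (or $L^p$), and multiplication by $1/(\gamma+\kappa)$ has norm $1/(\gamma+\kappa_{\min})$. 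Hence $\|T_\gamma\| \leq \lambda/(\gamma+\kappa_{\min})$, which is strictly less than $1$ as soon as $\gamma > \gamma_0 := \lambda - \kappa_{\min}$.

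For such $\gamma$ the Neumann series $(I - T_\gamma)^{-1} = \sum_{n=0}^\infty T_\gamma^n$ converges in operator norm, giving
\begin{equation}
  (\gamma - \op{A})^{-1} \;=\; \Bigl(\sum_{n=0}^\infty T_\gamma^n\Bigr)(\gamma + \op{M})^{-1}.
\end{equation}
Every term on the right is a composition of positive operators and therefore positive, and the limit of a norm-convergent sum of positive operators remains positive. This yields $(\gamma - \op{A})^{-1} \geq 0$ for all $\gamma > \gamma_0$, which is the required resolvent positivity.

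The only mild subtlety is choosing the underlying Banach lattice so that the norm estimates above are clean and so that dense definition (needed for the Hille--Yosida application quoted right before the lemma) is transparent; working on $C(S^1)$ with the sup norm, or equivalently on $L^\infty(S^1)$ using $\|\B\|_1 = 1$, makes both points automatic, since $\B\in C^2$ implies $\B\star u \in C(S^1)$ and $\kappa \in C^2$ implies $\op{M}$ preserves $C(S^1)$. I do not anticipate a genuine obstacle: the argument is essentially a perturbation-of-a-positive-multiplication-semigroup calculation, and the positivity of both ingredients makes the Neumann series argument go through without sign cancellations.
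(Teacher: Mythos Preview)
Your argument is correct and is essentially the same as the paper's: both factor $\gamma-\op{A}$ through the positive multiplication operator $\gamma+\kappa$, introduce the positive operator $T_\gamma = (\gamma+\kappa)^{-1}(\lambda\,\B\star\,\cdot\,)$ (the paper calls it $\op{B}_\gamma$), and sum the Neumann series to exhibit $(\gamma-\op{A})^{-1}$ as a positive operator. The only notable difference is that you supply an explicit threshold $\gamma_0=\lambda-\kappa_{\min}$ via the norm bound $\|T_\gamma\|\le \lambda/(\gamma+\kappa_{\min})$, whereas the paper's sketch takes $\gamma_0=\rho(\op{A})$ and leaves the convergence of the Neumann series implicit; your version is therefore slightly more self-contained.
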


\begin{proof}
  For the sake of this sketch, we will avoid the details regarding the
  underlying space $A$ is defined on and will define inequalities pointwise.
  Choose $\gamma_0 = \rho(\op{A})$. Fix $\gamma\in\Rn$ where $\gamma>\gamma_0$
  and some positive function $f\geq 0$. It suffices to show that $g\geq0$ where
  $g$ is:
  \begin{equation}
    (\gamma - \op{A})^{-1}f = g
  \end{equation}
  Working things out, we can observe:
  \begin{align}\nonumber
    (\gamma - \op{A})^{-1}f &= g & \iff\\\nonumber
    f &= (\gamma - \op{A})g & \iff\\\nonumber
    f &= (\gamma + \kappa)g - \B\star g &\iff\\\nonumber
    \frac{f}{\gamma + \kappa} &= g - \frac{\B\star g}{\gamma + \kappa}
  \end{align}
  Define $\op{B}_\gamma = \frac{\B\star}{\gamma+\kappa}$. We can now use the
  Neumann series to finish the proof:
  \begin{align}\nonumber
    g &= (1-\op{B}_\gamma)^{-1}f & \iff \\
    g &= \sum_{k=0}^\infty B_\gamma^k f \geq 0
  \end{align}
  The last inequality comes from our hypothesis that $\B\geq0$.
\end{proof}

The main result we are showing here is as follows:
\begin{theorem}\label{principal}
  $\op{A}$ has an isolated, algebraically simple principal eigenvalue with
  positive eigenfunction.
\end{theorem}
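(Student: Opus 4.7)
The plan is to sidestep the fact that $\op{A}$ is not itself a positive operator by working instead with a resolvent $(\gamma-\op{A})^{-1}$, which the preceding lemma has already shown to be positive for $\gamma$ sufficiently large. I would then apply the Krein-Rutman theorem to this resolvent and transport the resulting eigendata back to $\op{A}$ via a spectral-mapping argument, exactly as the introduction foreshadows.

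For the setup, work on $C(S^1)$ with its natural cone of nonnegative functions. Fix $\gamma>\gamma_0$ from the lemma and let $\op{R}_\gamma=(\gamma-\op{A})^{-1}$; the Neumann expansion in the lemma expresses this as $(I-\op{B}_\gamma)^{-1}$ composed with multiplication by $1/(\gamma+\kappa)$. Three properties of $\op{R}_\gamma$ need verification. Positivity is immediate from the lemma. Compactness follows because $\B\in C^2(S^1)$ implies that convolution with $\B$ sends $C(S^1)$ into a bounded subset of $C^2(S^1)$, which is compact in $C(S^1)$ by Arzelà-Ascoli; multiplication by the bounded function $1/(\gamma+\kappa)$ preserves compactness, and the Neumann series then gives compactness of $\op{R}_\gamma$. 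Strong positivity uses the strict pointwise positivity of $\B$: a single application of $\op{B}_\gamma$ already sends any nonzero nonnegative function to a strictly positive continuous one, so $\op{R}_\gamma$ maps the cone, minus the origin, into its interior.

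With these three properties in hand, the strong form of the Krein-Rutman theorem produces a simple, isolated, positive principal eigenvalue $\mu^\star$ of $\op{R}_\gamma$, dominating the rest of the spectrum in modulus, together with a strictly positive eigenfunction $u^\star$. The identity $\op{R}_\gamma u=\mu u\iff \op{A} u=(\gamma-\mu^{-1})u$ then yields a corresponding eigenvalue $\lambda^\star=\gamma-1/\mu^\star$ of $\op{A}$ with the same eigenfunction $u^\star>0$. Since $\mu\mapsto\gamma-1/\mu$ is a homeomorphism between $\sigma(\op{R}_\gamma)\setminus\{0\}$ and $\sigma(\op{A})$, the algebraic simplicity and isolation of $\mu^\star$ should transfer directly to $\lambda^\star$, and the dominance of $|\mu^\star|$ translates into $\lambda^\star$ being the rightmost eigenvalue of $\op{A}$.

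The step I expect to be the main obstacle is making the spectral correspondence precise enough that algebraic simplicity genuinely survives the transfer: matching the Jordan block structure of $\op{R}_\gamma$ at $\mu^\star$ with that of $\op{A}$ at $\lambda^\star$, and confirming that maximizing $|\mu|$ for $\op{R}_\gamma$ really corresponds to maximizing $\mathrm{Re}\,\lambda$ for $\op{A}$ rather than some weaker ordering. Once that bookkeeping is in place, everything else should reduce to standard functional-analytic arguments, provided the function space and cone are chosen carefully at the outset.
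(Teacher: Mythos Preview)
Your approach via the resolvent is natural, but there is a genuine gap: $\op{R}_\gamma=(\gamma-\op{A})^{-1}$ is \emph{not} compact. Since $\op{A}$ is bounded, $\gamma-\op{A}$ is a bounded bijection, and a compact operator with bounded inverse would force the identity to be compact. Concretely, in the Neumann expansion $\op{R}_\gamma f=\sum_{k\ge 0}\op{B}_\gamma^k\bigl(f/(\gamma+\kappa)\bigr)$ the $k=0$ term is multiplication by $1/(\gamma+\kappa)$, which is not compact; only the tail $k\ge 1$ is. Thus $\op{R}_\gamma$ is merely a compact perturbation of a multiplication operator, and its essential spectrum is the range of $1/(\gamma+\kappa)$, namely $\bigl[\tfrac{1}{\gamma+\sup\kappa},\tfrac{1}{\gamma+\inf\kappa}\bigr]$. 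To invoke Krein--Rutman you must first show that the spectral radius of $\op{R}_\gamma$ strictly exceeds $1/(\gamma+\inf\kappa)$, i.e.\ that there is an eigenvalue above the essential spectrum. Under the map $\mu\mapsto\gamma-1/\mu$ this is exactly the inequality $\mu_0>-\inf\kappa$ that the paper spends most of its effort establishing via an explicit test function in the Rayleigh quotient. Your outline bypasses this step by asserting compactness, and that assertion is false.

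A second, smaller issue: you invoke ``strict pointwise positivity of $\B$'' to get strong positivity of $\op{R}_\gamma$ after a single application, but the hypotheses only give $\B\ge 0$ (indeed the paper's numerical kernel $\B_2$ vanishes on an interval). This is precisely why Lemma~\ref{positive} is needed: it shows some power $\op{B}_\gamma^n$ is strongly positive via an irrational-rotation covering argument, and you should appeal to it rather than to strict positivity of $\B$. With both of these repaired---the spectral-gap inequality and strong positivity of a power---your resolvent route would go through, but at that point it is doing essentially the same work as the paper's proof, which applies Krein--Rutman to $\op{B}_{\mu_0}=\B\star\cdot/(\kappa+\mu_0)$ after locating $\mu_0$ variationally.
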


To make things more readable, we will break the proof out into a number of
lemmas and combine them at the end. Much of the analysis in this section relies
on proving facts for $\op{A}$ as an operator on the space $\Lp^2$ and
generalizing to $\Lp^1$. Before doing so, a quick lemma to ensure $\op{A}$ is
bounded on both spaces.

\begin{lemma}
  $\op{A}$ is a bounded linear operator on both $\Lp^1$ and $\Lp^2$.
\end{lemma}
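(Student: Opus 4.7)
The plan is to split $\op{A}$ into its two natural pieces and bound each separately. Write $\op{A}u = \op{K}u - \op{M}u$, where $\op{K}u = \lambda(\B\star u)$ is convolution (scaled by $\lambda$) and $\op{M}u = \kappa \cdot u$ is pointwise multiplication by $\kappa$. Linearity of both operators is immediate, so the claim reduces to showing each is bounded on $\Lp^1(S^1)$ and $\Lp^2(S^1)$, after which the triangle inequality yields the combined bound.

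First I would handle $\op{M}$. Since $\kappa$ is $C^2$ on the compact manifold $S^1$, it is bounded; set $M = \|\kappa\|_\infty < \infty$. Then $\|\op{M}u\|_p \leq M\,\|u\|_p$ for $p=1,2$ directly from the definition of the $\Lp^p$ norm, so $\op{M}$ is bounded on both spaces.

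Next I would handle $\op{K}$ via Young's convolution inequality. Because $\B \in \Lp^1(S^1)$ with $\|\B\|_1 = 1$, Young's inequality gives $\|\B\star u\|_p \leq \|\B\|_1\,\|u\|_p = \|u\|_p$ for every $1 \leq p \leq \infty$, so in particular $\|\op{K}u\|_p \leq |\lambda|\,\|u\|_p$ for $p = 1, 2$. Combining with the previous step,
\begin{equation}
  \|\op{A}u\|_p \;\leq\; \|\op{K}u\|_p + \|\op{M}u\|_p \;\leq\; \bigl(|\lambda| + \|\kappa\|_\infty\bigr)\|u\|_p,
\end{equation}
which establishes boundedness on both $\Lp^1$ and $\Lp^2$.

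There is no real obstacle here: the only thing to be careful about is that we are working on the compact group $S^1$ so that $\kappa$ (being $C^2$) is automatically in $L^\infty$, and that $\B$ lies in $L^1$ so Young's inequality applies. Both facts are built into the standing hypotheses, so the argument is essentially mechanical.
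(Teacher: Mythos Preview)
Your proof is correct and follows essentially the same approach as the paper: split $\op{A}$ into the convolution part and the multiplication part, bound the convolution via Young's inequality, and bound the multiplication via $\|\kappa\|_\infty$. The only cosmetic difference is that the paper invokes Young with $\|\B\|_\infty$ and $\|\B\|_2$ in place of your uniform use of $\|\B\|_1=1$, but the argument is the same.
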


\begin{proof}
  By the hypothesis that $\B$ is bounded, we use Young's inequality to observe:
  \begin{align}\nonumber
    \|\B\star u\|_1&\leq\|\B\|_\infty\|u\|_1\\
    \|\B\star u\|_2&\leq\|\B\|_2\|u\|_2
  \end{align}
  The convolution is therefore a bounded operator on both spaces. Since
  $\op{A}$ is the sum of the convolution and multiplication by a bounded
  function, we can conclude that $\op{A}$ is a bounded operator
\end{proof}

A fact about $\B\star\cdot$ we need:

\begin{lemma}
  The operator $(\B\star u)(\theta)$ is compact.
\end{lemma}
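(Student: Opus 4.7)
The plan is to apply the Arzelà--Ascoli theorem: I will show that the convolution operator $\op{K}u := \B\star u$ maps bounded sets of $\Lp^p(S^1)$ (for $p=1,2$) into sets that are uniformly bounded and equicontinuous as subsets of $C(S^1)$, hence relatively compact there. Combined with the continuous embedding $C(S^1)\hookrightarrow \Lp^p(S^1)$ (which holds because $S^1$ has finite measure), this gives compactness of $\op{K}$ as an operator on each of $\Lp^1$ and $\Lp^2$.

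For uniform boundedness, I would use Young's inequality with exponents $(1,p',\infty)$. Since $\B\in C^2(S^1)$ on the compact circle, $\B\in\Lp^{p'}(S^1)$ for every $p'\in[1,\infty]$, giving
\begin{equation}
\|\B\star u\|_\infty \le \|\B\|_{p'}\|u\|_p,
\end{equation}
so the image of the closed unit ball of $\Lp^p$ is uniformly bounded in $C(S^1)$.

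For equicontinuity, I would exploit that $\B$ is $C^1$ on $S^1$ and hence Lipschitz with constant $L=\|\B'\|_\infty$. For $\theta_1,\theta_2\in S^1$ and $\|u\|_p\le 1$,
\begin{equation}
\bigl|(\B\star u)(\theta_1)-(\B\star u)(\theta_2)\bigr|
\le \int_{S^1}\bigl|\B(\theta_1-\phi)-\B(\theta_2-\phi)\bigr|\,|u(\phi)|\dint\phi
\le L|\theta_1-\theta_2|\,\|\mathbf{1}\|_{p'}\|u\|_p,
\end{equation}
which yields a modulus of continuity that is uniform in $u$. Arzelà--Ascoli then delivers relative compactness of $\op{K}$(unit ball) in $C(S^1)$, and passing through the embedding into $\Lp^p$ completes the proof.

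There is no real obstacle here; the result is essentially a packaging of the $C^2$ regularity of $\B$. The only minor point worth flagging is that one must carry out the argument in a way that covers both target spaces simultaneously, which is why I would state Young's inequality and the Lipschitz estimate in a $p$-agnostic form rather than, say, realizing $\op{K}$ as a Hilbert--Schmidt operator (which would give compactness only on $\Lp^2$). The $C^2$ hypothesis is evidently more than what is strictly needed: H\"older continuity of $\B$ suffices for equicontinuity, but the stronger assumption keeps the presentation clean.
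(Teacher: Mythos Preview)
Your proof is correct and, for $\Lp^1$, essentially matches the paper's: both invoke Arzel\`a--Ascoli, obtaining uniform boundedness from Young's inequality and equicontinuity from the bound on $\B'$ (the paper bounds $\|(\B\star f_n)'\|_\infty=\|\B'\star f_n\|_\infty$ directly, while you use the Lipschitz estimate on $\B$, which is the same thing). The only difference is that the paper disposes of the $\Lp^2$ case separately by observing that $\B(\theta-\omega)$ is a Hilbert--Schmidt kernel, whereas you run the Arzel\`a--Ascoli argument in a $p$-agnostic form to cover both spaces at once---a slightly more economical packaging, as you yourself note.
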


\begin{proof}
When we are considering the operator $\B\star\cdot$ over $\Lp^2$, we can simply
observe that $\B(\theta-\omega)$ is a Hilbert-Schmidt kernel and that implies
that the convolution is compact. However, proving compactness over $\Lp^1$ is
slightly more difficult. We will use the Arzel\`{a}-Ascoli theorem to show that
$\B\star\cdot$ maps bounded sequences to sequences with a convergent
subsequence.

Take a sequence of functions $\{f_n\}_{n\in\Nn}$ where $\|f_n\|_1\leq1$. Using
Young's inequality as above, we obtain a uniform bound on $\|\B\star
f_n\|_\infty$:
\begin{equation}
  \|\B\star f_n\|_\infty\leq\|\B\|_\infty\|f_n\|_1\leq\|B\|_\infty
\end{equation}
Since $\B\in C^2$, we know that $\|\B^\prime\|_\infty<\infty$. We can again apply
Young's inequality to show that the derivative of $\B\star f_n$ is uniformly
bounded:
\begin{equation}
  \|(\B\star f_n)^\prime\|_\infty=\|\B^\prime\star f_n\|_\infty
  \leq\|\B^\prime\|_\infty\|f_n\|_1\leq\|B^\prime\|_\infty
\end{equation}
The $(\B\star f_n)^\prime$ being uniformly bounded implies that $\{\B\star
f_n\}$ is uniformly Lipschitz. That means that Arzel\`{a}-Ascoli holds and
$\B\star\cdot$ is compact.
\end{proof}

Here, we should introduce a bit of notation to clarify which space we are
considering when we talk about the spectrum of $\op{A}$. $\sigma_1(\op{A})$
refers to the spectrum of $\op{A}$ over the space $\Lp^1$, and
$\sigma_2(\op{A})$ is the spectrum over $\Lp^2$. We can now state and prove
a lemma which characterizes the spectrum of $\op{A}$ for much of the complex
plane.

\begin{lemma}\label{eigenvalues}
  All elements of $\sigma_1(\op{A})$ and $\sigma_2(\op{A})$ outside of the
  line $-R(\kappa)=[-\sup\kappa, -\inf\kappa]$ are eigenvalues.
\end{lemma}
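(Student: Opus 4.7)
The plan is to reduce the assertion to a Fredholm-alternative argument for a compact operator. Write $\op{A}u=\lambda(\B\star u)-M_\kappa u$, where $M_\kappa$ denotes multiplication by $\kappa$. Since $\kappa\in C^2(S^1)$ is continuous on a compact connected space, its range is $R(\kappa)=[\inf\kappa,\sup\kappa]$, and the spectrum of $M_\kappa$ on either $\Lp^1$ or $\Lp^2$ coincides with this range. The strategy is to show that at spectral values $\mu$ lying outside $-R(\kappa)$, the compact ``branching'' part of $\op{A}$ becomes the dominant piece in a precise Fredholm sense, forcing any such $\mu$ to be an eigenvalue.

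Fix $\mu\in\Cn\setminus(-R(\kappa))$. Then $\mu+\kappa(\theta)$ is nowhere zero on $S^1$, and by compactness of $S^1$ and continuity of $\kappa$, $|\mu+\kappa|$ is bounded below by a positive constant. Hence multiplication by $(\mu+\kappa)^{-1}$ is a bounded operator on both $\Lp^1$ and $\Lp^2$. I would then introduce the auxiliary operator
\[
T_\mu u:=\frac{\lambda}{\mu+\kappa}\,(\B\star u),
\]
which is the composition of a bounded multiplication with the compact convolution $\B\star\cdot$ (compact on both spaces by the previous lemma) and is therefore itself compact. A direct rearrangement shows that $(\mu-\op{A})u=f$ is equivalent to $(I-T_\mu)u=(\mu+\kappa)^{-1}f$, so $\mu\in\sigma_p(\op{A})$ precisely when $I-T_\mu$ fails to be invertible, with the same equivalence on each $\Lp^p$.

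At this point I would invoke the Fredholm alternative (Riesz--Schauder) for compact operators on a Banach space: either $I-T_\mu$ is invertible, in which case $\mu$ lies in the resolvent set, or $1$ is an eigenvalue of $T_\mu$, producing a nonzero $u$ with $T_\mu u=u$ and hence $\op{A}u=\mu u$. This gives the claim in both $\Lp^1$ and $\Lp^2$. The only real obstacle is bookkeeping: I must confirm that the rearrangement is legitimate on each underlying space and that multiplication by the bounded function $(\mu+\kappa)^{-1}$ genuinely preserves the compactness of $\B\star\cdot$. Both points follow immediately from the bounded/compact operator lemmas already proved, so no substantive new estimate is needed beyond identifying the right compact operator to feed into Riesz--Schauder.
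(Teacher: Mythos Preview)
Your argument is correct and is in fact the cleaner route. You factor $\mu-\op{A}$ through the invertible multiplication by $\mu+\kappa$ to obtain the equivalent equation $(I-T_\mu)u=(\mu+\kappa)^{-1}f$ with $T_\mu$ compact, and then invoke Riesz--Schauder directly: either $\mu$ lies in the resolvent, or $1\in\sigma_p(T_\mu)$ and hence $\mu\in\sigma_p(\op{A})$. This handles the point, continuous, and residual spectra simultaneously on both $\Lp^1$ and $\Lp^2$.

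The paper instead argues via approximate eigenvectors: assuming $\mu$ lies in the point or continuous spectrum, it picks a normalized Weyl sequence $u_n$ with $\|(\op{A}-\mu)u_n\|\to0$, uses compactness of $\B\star\cdot$ to make $\B\star u_{n_k}$ converge, and then reads off a genuine eigenfunction $w=\lim u_{n_k}=(\kappa+\mu)^{-1}\lim(\B\star u_{n_k})$. The residual-spectrum case is not covered in that lemma and is deferred to the subsequent lemma on equality of spectra, where it is disposed of by self-adjointness on $\Lp^2$ and a density argument for $\Lp^1$. Your Fredholm-alternative approach buys you a unified, shorter proof that needs no separate treatment of the residual spectrum; the paper's approach, while more hands-on, has the minor advantage of exhibiting the eigenfunction explicitly as a limit and showing in passing that it lies in $\Lp^\infty$.
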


\begin{proof}
This result holds equally for all the $\Lp^p$ spaces. I will prove the result
for $\Lp^1$. The argument holds by simply replacing the metric $\|\cdot\|_1$
with $\|\cdot\|_p$. The eigenfunctions over $\Lp^1$ are bounded, so are in all
of the $\Lp^p$ spaces. Fix a number $\mu\in\sigma_1(\op{A})$ with
$\mu\notin[-\sup\kappa,-\inf\kappa]$ and in either the continuous spectrum or
the point spectrum. Since $\mu$ is not in the residual spectrum, we have a
sequence $\{u_n\}_{n\in\Nat}\subset\Lp^1$ with $\|u_n\|_1$ such that:
\begin{equation*}
  \lim_{n\rightarrow\infty}\|(\op{A}-\mu\op{I})u_n\|_1=0
\end{equation*}
By invoking the compactness of $\B\star\cdot$ from the previous step, there
exists a subsequence $u_{n_k}$ such that:
\begin{align*}
  0=\lim_{k\to\infty}\|(\op{A}-\mu\op{I})u_{n_k}\|_1
  &=\lim_{k\to\infty}\|(\B\star u_{n_k})(\theta)
  -(\kappa(\theta)+\mu)u_{n_k}(\theta)\|_1 \\
  &=\lim_{k\to\infty}\|v(\theta)-(\kappa(\theta)+\mu)u_{n_k}(\theta)\|_1
\end{align*}
where $v$ is the limit of $\B\star u_{n_k}$. By hypothesis,
$\kappa(\theta)+\mu\neq0$, so $\displaystyle \frac{1}{\kappa(\theta)+\mu}$ is
bounded. We now have that:
\begin{equation*}
  \lim_{k\to\infty}u_{n_k}(\theta)=\frac{v(\theta)}{\kappa(\theta)+\mu}=w(\theta)
\end{equation*}
almost everywhere. By the fact that $\B\star\cdot$ is closed, $(\B\star
w)(\theta)=v(\theta)$. Applying the above identities shows that
$\op{A}w=\mu w$. Finally, observe that $w\in\Lp^\infty$ since:
\begin{equation}
  w=\frac{\B\star w}{\kappa+\mu}
\end{equation}
$\B\star w$ is bounded by Young's inequality and $\displaystyle
\frac{1}{\kappa(\theta)+\mu}$ is bounded by hypothesis.
\end{proof}

We can show an even stronger correspondence between $\sigma_1(\op{A})$ and
$\sigma_2(\op{A})$.

\begin{lemma}\label{equal-spectra}
  The spectra $\sigma_1(\op{A})$ and $\sigma_2(\op{A})$ are equal outside of
  $[-\sup\kappa, -\inf\kappa]$.
\end{lemma}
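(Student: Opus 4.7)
The plan is to leverage Lemma \ref{eigenvalues} aggressively: that lemma does not merely show that points outside $[-\sup\kappa,-\inf\kappa]$ are eigenvalues, it also exhibits the eigenfunctions as elements of $\Lp^\infty$. Since $S^1$ has finite Lebesgue measure, we have the chain of continuous inclusions $\Lp^\infty(S^1) \hookrightarrow \Lp^2(S^1) \hookrightarrow \Lp^1(S^1)$, so any $\Lp^\infty$ eigenfunction is a legitimate element of both ambient spaces on which $\op{A}$ acts.

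First, I would take $\mu \in \sigma_1(\op{A}) \setminus [-\sup\kappa,-\inf\kappa]$. By Lemma \ref{eigenvalues}, $\mu$ is an eigenvalue with eigenfunction $w \in \Lp^\infty \subset \Lp^2$, and the identity $\op{A}w = \mu w$ holds almost everywhere. Since $\op{A}$ is defined by the same formula on both spaces (convolution with $\B$ plus multiplication by $-\kappa$), this identity realizes $\mu$ as an eigenvalue of $\op{A}$ on $\Lp^2$, giving $\mu \in \sigma_2(\op{A})$.

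Conversely, for $\mu \in \sigma_2(\op{A}) \setminus [-\sup\kappa,-\inf\kappa]$, the same lemma (applied in its $\Lp^2$ form, as the lemma's proof notes holds verbatim) produces an eigenfunction in $\Lp^\infty \subset \Lp^1$, so $\mu \in \sigma_1(\op{A})$ by the same pointwise identity. The two inclusions together yield the equality of the two spectra outside the excluded interval.

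The main obstacle, such as it is, is purely bookkeeping: one has to verify that $\op{A}$ really does act by the identical formula on $\Lp^1$ and $\Lp^2$ eigenfunctions, so that an almost-everywhere eigenvalue equation transfers unchanged between spaces. This is clear since both $\B \star w$ and $\kappa w$ are well-defined for any $w \in \Lp^\infty$, and their pointwise values do not depend on whether one regards $w$ as living in $\Lp^1$ or $\Lp^2$. No analytic estimate beyond the $\Lp^\infty$ bound already established in Lemma \ref{eigenvalues} and the finiteness of $|S^1|$ is required.
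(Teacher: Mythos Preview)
Your argument is correct and shares the paper's core idea: Lemma \ref{eigenvalues} forces any spectral point outside $[-\sup\kappa,-\inf\kappa]$ to be an eigenvalue with an $\Lp^\infty$ eigenfunction, which then lives in both $\Lp^1$ and $\Lp^2$ and carries the eigenvalue relation across. The paper's own proof is organized differently---it decomposes the spectrum into point, continuous, and residual parts and separately argues (via self-adjointness on $\Lp^2$ and density of $\Lp^2$ in $\Lp^1$) that the residual spectrum is empty---but once Lemma \ref{eigenvalues} is taken at face value this case analysis is redundant, and your streamlined route is sufficient.
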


\begin{proof}
To show this, we will consider the spectrum in three parts, the point spectrum,
the continuous spectrum, and the residual spectrum. Any eigenvalue of $\op{A}$
on $\Lp^2$ is an eigenvalue on $\Lp^1$ by the inclusion
$\Lp^2\subset\Lp^1$. The reverse inclusion comes from the fact that all of the
eigenvalues over $\Lp^1$ outside of $[-\sup\kappa,-\inf\kappa]$ are in
$\Lp^\infty\supset\Lp^2$. We now have that the two point spectrums are equal.
The result in step 3 implies that there is no elements of the continuous
spectrum outside of $[-\sup\kappa,-\inf\kappa]$, which implies the continuous
spectrums are equal. All that remains is to show that $\op{A}$ has no
residual spectrum on either $\Lp^2$ or $\Lp^1$.

The natural embedding of $\Lp^q$ into $\Lp^p$ where $1\leq p<q\leq\infty$ is a
dense embedding. Continuous functions are dense in $\Lp^1$ as can be seen by
approximating simple functions by continuous functions. Since continuous
functions are in $\Lp^\infty$, that implies $\Lp^\infty$ is dense in
$\Lp^1$. By the inclusion $\Lp^q\subset\Lp^1$, continuous functions are dense
in $\Lp^q$ for all $1<q\leq\infty$.

The last step remains to show that $\op{A}$ has no residual spectrum over
neither $\Lp^2$ nor $\Lp^1$. By self-adjointness, $\op{A}$ has no residual
spectrum over $\Lp^2$. The fact that $\op{A}$ has no residual spectrum over
$\Lp^1$ follows immediately from the density of the embedding $\Lp^2$ in
$\Lp^1$. We know that $\op{A}-\mu\op{I}$ has dense range in $\Lp^2$ for all
$\mu\in\Cn$ whenever $\mu$ is not an eigenvalue. Assume $\mu$ is not an
eigenvalue, the dense embedding and $\op{A}-\mu\op{I}$ having dense range in
$\Lp^2$ implies that $\op{A}-\mu\op{I}\Big|_{\Lp^2}$ is dense in $\Lp^1$. That
implies that $\op{A}-\mu\op{I}\Big|_{\Lp^1}$ is dense in $\Lp^1$ and that $\mu$
is not in the residual spectrum.
\end{proof}

Moving away from the operator theory world for a moment, we need a more set
theoretic lemma for proving that the principal eigenvalue is simple.

\begin{lemma}\label{positive}
  For any given kernel $\B$, there exists $n$ such that
  $\Big(\op{B}_\gamma\Big)^n$ is strongly positive, i.e.:
  \begin{equation*}
    u\geq0\implies\Big(\op{B}_\gamma\Big)^n u>0
  \end{equation*}
  whenever $u\geq0$ is a continuous function not uniformly zero and
  $\gamma>-\inf\kappa$.
\end{lemma}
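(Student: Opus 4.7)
The plan is to reduce the claim to a statement about the ``positive support'' $V_n := \{\theta \in S^1 : (\op{B}_\gamma)^n u(\theta) > 0\}$. Since $\gamma > -\inf\kappa$ makes $1/(\gamma + \kappa)$ strictly positive everywhere on $S^1$, pointwise multiplication by this factor leaves $V_n$ unchanged at every stage of the iteration. So it is enough to analyse the effect of convolution with $\B$ on positive supports, starting from $V_0 := \{\phi : u(\phi) > 0\}$, which is a nonempty open set because $u$ is continuous, nonnegative, and not uniformly zero.

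The key step, which I expect to be the main technical content, is the identity
\begin{equation*}
  \{\theta : (\B \star v)(\theta) > 0\} \;=\; U + \{\phi : v(\phi) > 0\},
\end{equation*}
valid for any continuous $v \geq 0$ that is not identically zero, where $U := \{\B > 0\}$ and $+$ denotes the Minkowski sum on the group $S^1$. The nontrivial direction uses continuity: if $\B(\theta - \phi)v(\phi) > 0$ at a single $\phi$, then by continuity the integrand is strictly positive on an open neighbourhood, forcing the integral to be strictly positive. Iterating the identity yields $V_n = nU + V_0$, where $nU$ denotes the $n$-fold Minkowski sum $U + \cdots + U$.

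To finish, I would exploit that $\B$ is continuous with $\int \B = 1$, so there is some $\theta^\ast$ and $\delta > 0$ with $(\theta^\ast - \delta, \theta^\ast + \delta) \subset U$. A routine Minkowski-sum computation shows that $nU$ contains an interval of length $2n\delta$, which wraps around to cover the entire circle once $n \geq \lceil \pi / \delta \rceil$. For such $n$ the set $V_n = nU + V_0$ is all of $S^1$. Since $(\op{B}_\gamma)^n u$ is continuous (inherited from the $C^2$ smoothness of $\B$), the equality $V_n = S^1$ is precisely the pointwise strict positivity required by strong positivity.

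The only real obstacle is handling the Minkowski-sum bookkeeping carefully on the compact group $S^1$, ensuring that the length of the interval contained in $nU$ really does grow linearly with $n$ before wraparound. Once that is in place the proof is essentially mechanical.
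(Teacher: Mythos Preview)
Your proposal is correct and in fact cleaner than the paper's argument, though it shares the same reduction. Both proofs begin by noting that division by $\kappa+\gamma$ is irrelevant for positivity, so one only needs to track how the set $\{\,\theta:(\op{B}^n u)(\theta)>0\,\}$ evolves, and both identify this with an $n$-fold Minkowski sum involving $U=\{\B>0\}$. The divergence is in how coverage of $S^1$ is obtained. The paper picks a point $y\in U$ that gives an irrational rotation, so that the translates $\{x+2jy\}_{j\ge 0}$ are dense; it then covers $S^1$ with the fixed-width intervals $(x+2jy-\delta,x+2jy+\delta)$ and invokes compactness to extract a finite subcover. You instead use the elementary fact that the Minkowski sum of $n$ copies of an interval of length $2\delta$ contains an interval of length $2n\delta$, which wraps around the circle once $n\ge\lceil\pi/\delta\rceil$. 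Your route avoids the irrational-rotation and density steps entirely and yields an explicit bound on $n$, whereas the paper's compactness argument is nonconstructive. The paper's version has the mild advantage that it only ever needs a single point of $U$ (plus a small neighbourhood), so it generalises more readily to settings where $U$ might be sparse; your version exploits the full open interval in $U$ but is the shorter path here.
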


\begin{proof}
  We are only concerned with whether or not $\Big(\op{B}_\gamma\Big)^n$ is
  positive and not on the specific value of $\Big(\op{B}_\gamma\Big)^n$, so it
  is sufficient to show the result for $\op{B}^n$ where $\op{B}u=\B\star u$.
  Define $\Sigma$ to be the $\sigma-$algebra associated with the Lebesgue
  measure on $S^1$. We can define the set mapping $T:\Sigma\to\Sigma$ as:
  \begin{equation}
    T\Omega = \supp\{\B\star\mathbf{1}_\Omega\}
  \end{equation}
  where $\Omega\in\Sigma$.  Choose some open interval
  $(y-\delta,y+\delta)\subset\supp\{\B\}$ for $y\neq0$,
  $y\in\Rn\backslash\mathbb{Q}$. Assume $\Omega$ contains an open interval
  $(x-\eps,x+\eps)$. Observe that for every $z\in(x+y-\delta,x+y+\delta)$:
  \begin{equation}
    \B\star\mathbf{1}_\Omega(z)=\int\B(z-s)\mathbf{1}_\Omega(s)\dint s
    \geq\int\limits_{x-\eps}^{x+\eps}\B(z-s)\dint s
    =\int\limits_{y-\eps}^{y+\eps}\B\Big((z-x+y)-s^\prime\Big)\dint s^\prime>0
  \end{equation}
  The above argument also holds for $z\in(x-y-\delta,x-y+\delta)$. Notice that
  while the existence of the interval was used in the above calculation, there
  is no explicit dependence on $\eps$ beyond that $\eps>0$. Iterating $T$,
  we can observe that:
  \begin{equation}
    T^{2n}\Omega\subset\cup_{0\leq j\leq n}\Big((x+2jy-\delta,x+2jy+\delta)
    \cup(x-2jy-\delta,x-2jy+\delta)\Big)
  \end{equation}
  By the fact that $x+2y$ is an irrational
  rotation, $\{x+2jy\}_{j\in\Nat}$ is dense in $S^1$ and the sets
  $\{(x+2jy-\delta,x+2jy+\delta)\}_{j\in\Nat}$ form an open covering of
  $S^1$. The compactness of $S^1$ implies that there exists $n\in\Nat$ such
  that $S^1\subset\cup_{1\leq j\leq n}(x+2jy-\delta,x+2jy+\delta)$. Rotational
  symmetry in $S^1$ implies that $n$ has no dependence on $x$. Fix some
  function $u\in C(S^1)$ with $u\geq0$ and $u$ not uniformly zero.  We can set
  $\Omega=\supp\{u\}$ and observe $\Omega$ contains an open interval
  containing some $x^\prime$. The above discussion implies:
  \begin{equation}
    T^{2n}\Omega\supset\cup_{1\leq j\leq n}
    (x^\prime+2jy-\delta,x^\prime+2jy+\delta)
    \supset S^1
  \end{equation}
  The above set relation implies $\op{B}^{2n}u>0$.
\end{proof}

We can now show the main result of this section, Theorem \ref{principal}.

\begin{proof}
  Since $\op{A}$ is self-adjoint, we know the spectrum over $\Lp^2$ is bounded
  by the eigenvalue $\mu_0$:
  \begin{equation}
    \mu_0=\sup_{\|u\|_2=1}\langle\op{A}u,u\rangle
  \end{equation}
  By lemma \ref{eigenvalues}, we know that $\mu_0$ is an eigenvalue as long as
  $\mu_0>-\inf\kappa$. It suffices to show there exists $u\in\Lp^2$ such that:
  \begin{equation}
    \langle\op{A}u,u,\rangle+\inf\kappa\langle u,u\rangle>0
  \end{equation}
  By the continuity of $\kappa$ and compactness of the circle,
  $\kappa(\theta)-\inf\kappa=0$ for at least one $\theta$. For the sake of
  notation, define $g(\theta)=\kappa(\theta)-\inf\kappa$.

  Observe that for any function of the form $u(\theta)=c+f(\theta)\geq0$ with
  $c,\,f(\theta)\geq0$ where $\int_{S^1}u(\theta)=1$:
  \begin{align*}
    \langle\B\star u,u\rangle
    &=\iint\limits_{S^1S^1}\B(\theta-\omega)(c+f(\omega))\dint\omega
    (c+f(\theta))\dint\theta \\
    &\geq c\iint\limits_{S^1S^1}\B(\theta-\omega)(c+f(\theta))
    \dint\omega\dint\theta
    =c
  \end{align*}
  since $f(\theta)\geq0$ and $\int\B=1$.  Without loss of generality, assume
  $g(0)=0$. By our hypothesis that $\kappa$ is $C^2$ and $0$ is a local minima,
  we have the inequality:
  \begin{equation}
    g(\theta)=g(\theta)-g(0)\leq Q|\theta-0|^2=Q\theta^2
  \end{equation}
  where $Q = \sup|g^{\prime\prime}|$. Define $f_\eps$ as:
  \begin{equation*}
    f_\eps=\frac{\mathbf{1}_{[-\eps,\eps]}}{2\eps}
  \end{equation*}
  where $\mathbf{1}$ is the usual indicator function. Note that $\int f=1$. We
  now have the two relations:
  \begin{align*}
    \int\limits_{S^1}f_\eps(\theta)g(\theta)\dint\theta
    &=\int\limits_{-\eps}^\eps f_\eps(\theta)g(\theta)\dint\theta
    \leq\frac{2}{\eps}\int\limits_0^\eps Q\theta^2\dint\theta
    =Q\frac{2\eps^2}{3} \\
    \int\limits_{S^1}f_\eps(\theta)^2g(\theta)\dint\theta
    &=\int\limits_{-\eps}^\eps f_\eps(\theta)^2g(\theta)\dint\theta
    \leq\frac{2}{\eps^2}\int\limits_0^\eps Q\theta^2\dint\theta
    =Q\frac{2\eps}{3}
  \end{align*}
  Combining those relations gives:
  \begin{align*}
    \langle g(c+f_\eps),(c+f_\eps)\rangle
    &=\int\limits_{S^1}g(\theta)(c+f_\eps(\theta))^2\dint\theta\\
    &=\int\limits_{S^1}g(\theta)\Big(c^2+2cf_\eps(\theta)+f_\eps(\theta)^2\Big)
    \dint\theta
    \leq Rc^2+cQ\frac{4\eps^2}{3}+Q\frac{2\eps}{3}
  \end{align*}
  where $R=\int g(\theta)\dint\theta$.

  First, assume $R\geq2\pi$ and $Q\geq1$. Fix $c=\frac{1}{2R}$,
  $\eps=\frac{1}{4RQ}$ and $c^\prime=1-2\pi c<1$. Fix $u=c+c^\prime
  f_\eps$. Putting all of the above together gives:
  \begin{align*}
    \langle\op{A}u+(\inf\kappa)u,u\rangle
    &=\langle\B\star(c+c^\prime f_\eps),c+c^\prime f_\eps\rangle
    -\langle g(c+c^\prime f_\eps),c+c^\prime f_\eps\rangle \\
    &\geq c-Rc^2-cc^\prime Q\frac{4\eps^2}{3}-c^{\prime2}Q\frac{2\eps}{3} \\
    &\geq\frac{1}{2R}-\frac{R}{4R^2}-\frac{Q}{R}\frac{2\eps^2}{3}
    -Q\frac{2\eps}{3} \\
    &\geq\frac{1}{4R}-\frac{Q}{3R}\frac{1}{8R^2Q^2}-\frac{Q}{6RQ} \\
    &\geq\frac{1}{4R}-\frac{1}{24R}-\frac{1}{6R}=\frac{1}{24R}>0
  \end{align*}
  If $R<2\pi$, then set $c=\frac{1}{2\pi}$ and $c^\prime=0$. If $R\geq2\pi$ and
  $Q<1$, set $c$ and $c^\prime$ as above and $\eps=\frac{1}{4R}$. That shows we
  have constructed such a $u$ and $\max\sigma_2(\op{A})>-\inf\kappa$.

  Now that we have a principal eigenvalue, to show that it is isolated, take
  a sequence $\mu_j\to\mu_0$ with associated eigenfunctions $u_j$. Choose a
  subsequence such that $\B\star u_{j_k}$ is convergent:
  \begin{align}\nonumber
    0&=\lim_{k\to\infty}\|\B\star u_{j_k}-\B\star u_{j_{k+1}}\|_2 \\\nonumber
    &=\lim_{k\to\infty}\|(\kappa+\mu_{j_k})u_{j_k}-
    (\kappa+\mu_{j_{k+1}})u_{j_{k+1}}\|_2
    \\\nonumber
    &=\lim_{k\to\infty}\|(\kappa+\mu_0)(u_{j_k}-u_{j_{k+1}})\|_2 \\
    &\leq\lim_{k\to\infty}(\inf\kappa+\mu_0)\|u_{j_k}-u_{j_{k+1}}\|_2
    =(\inf\kappa+\mu_0)\sqrt{2}>0
  \end{align}
  since $u_j$ and $u_{j_k}$ are orthogonal and $\mu_0>-\inf\kappa$.

  The last thing to show is that the eigenvalue is simple with positive
  eigenfunction. From the existence of an eigenvalue for $\op{A}$, we know
  that the following eigenvalue equation has at least one solution:
  \begin{equation}\label{pos-eigen}
    \frac{\B\star u}{\kappa+\mu_0}=u
  \end{equation}
  $\op{B}_{\mu_0} u=\frac{\B\star u}{\kappa+\mu_0}$ is obviously a positive and
  compact operator on the Banach space of continuous functions. The
  Krein-Rutman theorem implies that $\op{B}_{\mu_0}$ has an eigenvalue equal to
  its spectral radius. Assume that spectral radius $\rho(\op{B}_{\mu_0})>1$:
  \begin{align}\nonumber
    \frac{\B\star u}{\kappa+\mu_0}&=\rho(\op{B}_{\mu_0}) u & \iff \\\nonumber
    \B\star u-(\kappa u+\mu_0)u&=(\rho(\op{B}_{\mu_0})-1)(\kappa+\mu_0)u & \iff\\
    \op{A}u-\mu_0u &= (\rho(\op{B}_{\mu_0})-1)(\kappa+\mu_0)u &
  \end{align}
  That last equality implies:
  \begin{equation}
    \langle\op{A}u-\mu_0u,u\rangle = (\rho(\op{B}_{\mu_0})-1)\langle
    (\kappa+\mu_0)u,u\rangle>0
  \end{equation}
  in contradiction to our definition of $\mu_0$. Therefore,
  $\rho(\op{B}_{\mu_0})=1$, and there exists a positive $u$ that solves
  \eqref{pos-eigen}.

  Now it remains to show that the eigenvalue is simple. The Krein-Rutman
  theorem implies that it is sufficient to show that
  $\Big(\op{B}_{\mu_0}\Big)^n$ is strongly positive for some $n$. That is shown
  in Lemma \ref{positive}. With Lemma \ref{positive}, we have that
  $\Big(\op{B}_{\mu_0}\Big)^n$ is a strongly positive operator with leading
  eigenvalue 1. Since $\op{B}_{\mu_0}$ and $\Big(\op{B}_{\mu_0}\Big)^n$ have the
  same eigenvalues, $\Big(\op{B}_{\mu_0}\Big)^n$ having a simple leading
  eigenvalue implies the leading eigenvalue of $\op{B}_{\mu_0}$ is also simple.
\end{proof}

Another small proposition to characterize solutions to \eqref{first-dyna}:
\begin{proposition}\label{small-prop}
  Given $\B$ and $\kappa$, there exists precisely one $\lambda$ such that
  \eqref{first-dyna} has a stable, non-trivial equilibrium.
\end{proposition}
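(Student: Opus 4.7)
The plan is to let $\op{A}_\lambda = \lambda(\B\star\cdot) - \kappa(\theta)$ and denote by $\mu_0(\lambda)$ the principal eigenvalue supplied by Theorem~\ref{principal}. Since \eqref{first-dyna} is linear, a non-trivial equilibrium is precisely a nonzero element of $\ker \op{A}_\lambda$, and because $0\notin[-\sup\kappa,-\inf\kappa]$, Lemma~\ref{eigenvalues} tells us such an element exists iff $0$ is an eigenvalue of $\op{A}_\lambda$. Linear stability of that equilibrium, in turn, demands that $0$ dominate the spectrum, i.e.\ $\mu_0(\lambda)=0$ with every other spectral value strictly negative. The proposition therefore reduces to showing that $\mu_0(\lambda)=0$ has a unique solution in $\lambda>0$.

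First I would establish that $\mu_0$ is continuous and strictly increasing on $(0,\infty)$. Continuity follows because
\begin{equation*}
  \mu_0(\lambda) \;=\; \sup_{\|u\|_2=1}\langle \op{A}_\lambda u, u\rangle
\end{equation*}
is a supremum of affine functions of $\lambda$, hence convex and therefore continuous. For strict monotonicity, fix $\lambda_1<\lambda_2$ and let $u_{\lambda_1}$ be the normalized positive principal eigenfunction from Theorem~\ref{principal}; then
\begin{equation*}
  \mu_0(\lambda_2) \;\geq\; \langle \op{A}_{\lambda_2} u_{\lambda_1},u_{\lambda_1}\rangle
  \;=\; \mu_0(\lambda_1) + (\lambda_2-\lambda_1)\langle \B\star u_{\lambda_1},u_{\lambda_1}\rangle,
\end{equation*}
and the final inner product is strictly positive since $\B>0$ and $u_{\lambda_1}>0$.

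Next I would pin down the endpoint behaviour. As $\lambda\to 0^+$ the operator converges to multiplication by $-\kappa$, whose spectrum is $[-\sup\kappa,-\inf\kappa]$, giving $\mu_0(\lambda)\to -\inf\kappa<0$. Testing the Rayleigh quotient against the constant function $(2\pi)^{-1/2}$ yields
\begin{equation*}
  \mu_0(\lambda) \;\geq\; \lambda - \tfrac{1}{2\pi}\int_{S^1}\kappa \;\longrightarrow\; +\infty,
\end{equation*}
so the intermediate value theorem combined with strict monotonicity produces a unique $\lambda^*\in(0,\infty)$ with $\mu_0(\lambda^*)=0$; the corresponding strictly positive principal eigenfunction is the desired non-trivial equilibrium.

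Finally I would rule out stable non-trivial equilibria for $\lambda\neq\lambda^*$. For $\lambda<\lambda^*$, Lemma~\ref{eigenvalues} confines $\sigma(\op{A}_\lambda)$ to $[-\sup\kappa,-\inf\kappa]$ together with eigenvalues $\leq\mu_0(\lambda)<0$, so $0\notin\sigma(\op{A}_\lambda)$ and no non-trivial equilibrium exists. For $\lambda>\lambda^*$, even if $0$ happens to be a (necessarily non-principal) eigenvalue of $\op{A}_\lambda$, any such equilibrium is destabilized by a perturbation along the positive principal eigenfunction, which grows like $e^{\mu_0(\lambda) t}$ with $\mu_0(\lambda)>0$. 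The hardest step is the bookkeeping at $\lambda=\lambda^*$: one must convert the spectral picture into genuine Lyapunov stability of the one-dimensional family of positive equilibria, and for this the \emph{isolation} and \emph{algebraic simplicity} of the principal eigenvalue from Theorem~\ref{principal} are essential, since they provide the spectral gap needed for the projection onto $\ker\op{A}_{\lambda^*}$ and its complement to cleanly decompose the dynamics.
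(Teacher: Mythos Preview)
Your argument is correct, but it proceeds along a different line from the paper's. You analyze the map $\lambda\mapsto\mu_0(\lambda)$ directly: convexity of a pointwise supremum of affine functions gives continuity, the variational bound with the positive principal eigenfunction gives strict monotonicity, and the two endpoint estimates $\mu_0(\lambda)\leq\lambda-\inf\kappa$ and $\mu_0(\lambda)\geq\lambda-\tfrac{1}{2\pi}\int\kappa$ force a sign change, so the intermediate value theorem yields a unique root. The paper instead introduces the auxiliary operator $\op{A}^\prime u=\dfrac{\B\star u}{\kappa}$, which is self-adjoint in the weighted inner product $\langle f,g\rangle_\kappa=\int fg\kappa$, and observes the identity $\langle\op{A}_\lambda u,u\rangle=\lambda\langle\op{A}^\prime u,u\rangle_\kappa-\langle u,u\rangle_\kappa$; this forces $\sign\mu_0(\lambda)=\sign(\lambda\mu_0^\prime-1)$, where $\mu_0^\prime>0$ is the principal eigenvalue of $\op{A}^\prime$, so that the unique $\lambda$ is simply $1/\mu_0^\prime$. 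Your route is self-contained and avoids the weighted space, while the paper's route produces an explicit formula for $\lambda^\star$ in terms of the spectrum of a \emph{fixed} operator---an operator that is reused later in the analysis of \eqref{zero-dyna} and in the numerical scheme of Appendix~\ref{calc-equib}. One small remark: you write ``since $\B>0$'' to justify $\langle\B\star u_{\lambda_1},u_{\lambda_1}\rangle>0$, but the standing hypothesis is only $\B\geq0$; the conclusion is still immediate, however, because $u_{\lambda_1}>0$ and $\|\B\|_1=1$ force $\B\star u_{\lambda_1}\geq\inf u_{\lambda_1}>0$.
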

Define the operator $\op{A}_\lambda$ as:
\begin{equation*}
  \op{A}_\lambda u=\lambda\B\star u -\kappa u
\end{equation*}
Define the related operator and inner product spaces:
\begin{equation*}
  \op{A}^\prime u=\frac{\B\star u}{\kappa}
  \qquad
  \langle f,g\rangle_{\kappa}
  =\int_{S^1}f(\theta)g(\theta)\kappa(\theta)\dint\theta
\end{equation*}
We can now prove the result.
\begin{proof}
  Observe $\op{A}^\prime$ is self-adjoint with respect to
  $\langle\cdot,\cdot\rangle_\kappa$. Also, we have the relation between
  $\op{A}_\lambda$ and $\op{A^\prime}$:
  \begin{align*}
    \langle\op{A}_\lambda u,u\rangle
    &=\iint\limits_{S^1S^1}\lambda\B(\theta-\omega)u(\omega)\dint\omega\,
    u(\theta)-\kappa u(\theta)^2\dint\theta \\
    &=\iint\limits_{S^1S^1}\lambda\frac{\B(\theta-\omega)}{\kappa(\theta)}
    u(\omega)\dint\omega\,u(\theta)\kappa(\theta)\dint\theta
    -\int_{S^1}u(\theta)^2\kappa(\theta)\dint\theta \\
    &=\lambda\langle\op{A}^\prime u,u\rangle_\kappa-\langle u,u\rangle_\kappa
  \end{align*}
  From the proof of Theorem \ref{principal}, we have that $\op{A}^\prime$ has a
  simple principal eigenvalue, $\mu_0^\prime>0$. We know that $\mu_0$ and
  $\mu_0^\prime$ can be defined by the following:
  \begin{equation}
    \mu_0=\sup_{\|u\|_2\neq0}\frac{\langle\op{A}u,u\rangle}{\langle u,u\rangle}
    \qquad
    \text{and}
    \qquad
    \mu_0^\prime=\sup_{\langle u,u,\rangle_\kappa\neq0}
    \frac{\langle\op{A}^\prime u,u\rangle_\kappa}{\langle u,u\rangle_\kappa}
  \end{equation}
  There is a relationship between the sign of $\mu_0$ and $\mu_0^\prime$:
  \begin{align}\nonumber
    \sign[\lambda\mu_0^\prime-1]&=
    \sign\left[\sup_{\langle u,u\rangle_\kappa\neq0}
      \frac{\lambda\langle\op{A}^\prime u,u,\rangle_\kappa
        -\langle u,u,\rangle_\kappa}{\langle u,u\rangle_\kappa}\right]\\\nonumber
    &=\sup_{\langle u,u\rangle_\kappa\neq0}\left[\sign
    \frac{\lambda\langle\op{A}^\prime u,u,\rangle_\kappa
      -\langle u,u,\rangle_\kappa}{\langle u,u\rangle_\kappa}\right]\\\nonumber
    &=\sup_{\|u\|_2\neq0}\left[\sign
      \frac{\langle\op{A}_\lambda u,u\rangle}{\langle u,u\rangle}\right] \\
    &=\sign\left[\sup_{\|u\|_2\neq0}
      \frac{\langle\op{A}_\lambda u,u\rangle}{\langle u,u\rangle}\right]
    =\sign[\mu_0]
\end{align}
The argument holds since $\frac{\langle u,u\rangle_\kappa}{\langle
  u,u\rangle}>0$ by hypothesis of the supremum and does not change the sign of
the argument. We know that \eqref{first-dyna} has a non-trivial equilibrium if
and only if $\op{A}_\lambda$ has a zero eigenvalue. That equilibrium is stable
if and only if all of the other elements of the spectrum have negative real
part.  That is the case if and only if the zero eigenvalue is the largest
eigenvalue, i.e. $\mu_0=0$. The calculation above therefore implies there
exists only one $\lambda$ where \eqref{first-dyna} has a stable equilibrium
since there is only on $\lambda$ such that $\mu_0=\lambda\mu_0^\prime-1=0$.
\end{proof}

\section{Zeroth-order Branching}

\subsection{Existence of Solutions}

Proving the existence of solutions to \eqref{zero-dyna} does not require any
sophisticated machinery. Define $\op{G}(u)$ to be the nonlinear operator that
defines the dynamics of \eqref{zero-dyna}. First, to show local existence, we
need that $\op{G}$ is locally Lipschitz:
\begin{lemma}
  $\op{G}$ is locally Lipschitz for all $u\in\Lp^1$ where $u\geq0$.
\end{lemma}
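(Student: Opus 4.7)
The plan is to split the nonlinear operator into its two pieces,
$$
\op{G}(u)(\theta)=\frac{b\,(\B\star u)(\theta)}{\int u(\omega)\dint\omega}-\kappa(\theta)u(\theta),
$$
and treat them separately. The capping piece is linear and bounded: since $\kappa\in C^2$ is bounded, $\|\kappa(u-v)\|_1\leq\|\kappa\|_\infty\|u-v\|_1$, so it trivially satisfies a global Lipschitz bound. The real work is the normalized convolution, and the only analytic subtlety is the $1/\int u$ factor, which is singular at $u\equiv0$. I would therefore fix $u_0\in\Lp^1$ with $u_0\geq 0$ and $u_0\not\equiv 0$, set $m_0=\int u_0>0$, and work inside the open set
$$
U=\Big\{\,u\in\Lp^1:\,u\geq 0,\ \tfrac{m_0}{2}<\textstyle\int u<2\|u_0\|_1+1\,\Big\},
$$
on which both $\int u$ and $\int v$ are bounded above and bounded below away from zero.

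The key algebraic step is the identity
$$
\frac{\B\star u}{\int u}-\frac{\B\star v}{\int v}
=\frac{\B\star(u-v)}{\int v}+\frac{(\B\star u)\bigl(\int v-\int u\bigr)}{(\int u)(\int v)},
$$
obtained by adding and subtracting $(\B\star u)/\int v$. Taking $\Lp^1$ norms and using Young's inequality $\|\B\star w\|_1\leq\|\B\|_\infty\|w\|_1$ (or $\|\B\|_1=1$) together with $|\int v-\int u|\leq\|u-v\|_1$ and $\|u\|_1=\int u$ for nonnegative $u$, each term is controlled by a constant multiple of $\|u-v\|_1$, the constant depending only on $\|\B\|_\infty$, the upper bound for $\|u\|_1$, and the lower bound $m_0/2$ for $\int u$, $\int v$. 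Multiplying through by $b$ and adding the capping bound yields a Lipschitz constant $L=L(b,\|\B\|_\infty,\|\kappa\|_\infty,m_0,\|u_0\|_1)$ valid on $U$.

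The main (and really the only) obstacle is the singularity of $\op{G}$ at the zero function: without the restriction $\int u_0>0$ there is no local Lipschitz bound, which is why the statement is restricted to $u\geq 0$ and the local neighborhood is taken inside the positive cone away from zero. I would conclude by noting that the trivial equilibrium $u\equiv 0$ is not dynamically relevant for \eqref{zero-dyna} (it is excluded by the positivity hypothesis), so this local-in-$U$ Lipschitz bound is exactly what is needed to feed into a standard Picard iteration and obtain local existence of solutions in the subsequent argument.
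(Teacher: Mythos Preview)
Your argument is correct. You split $\op{G}(u)-\op{G}(v)$ directly via the identity
\[
\frac{\B\star u}{\int u}-\frac{\B\star v}{\int v}
=\frac{\B\star(u-v)}{\int v}+\frac{(\B\star u)(\int v-\int u)}{(\int u)(\int v)},
\]
and bound each term on a neighborhood where $\int u$ is bounded away from zero. This is a genuinely different route from the paper, which instead computes the Fr\'echet derivative $D\op{G}_u$ explicitly and bounds its operator norm on the ball $B(u,\eps)$ with $\eps<\|u\|_1/2$. Your approach is more elementary in that it bypasses differentiability altogether and works directly with differences; the paper's approach, while slightly heavier machinery for this lemma, has the payoff that the formula for $D\op{G}_u$ is reused later as the linearization $\op{D}=D\op{G}_{u_0}$ in the local stability analysis (Theorem~\ref{zero-stable}). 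Both arguments implicitly exclude $u\equiv 0$, where $\op{G}$ is undefined, so your restriction to $u_0\not\equiv 0$ matches the paper's $\eps<\|u\|_1/2$.
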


\begin{proof}
The derivative of $\op{G}$ is equal to:
\begin{equation}\label{deriv}
  D\op{G}_u(v)=\frac{(\B\star v)(\theta)}{\int_{S^1}u(\omega)\dint\omega}
  -\kappa(\theta)v(\theta) -\int\limits_{S^1}v(\omega)\dint\omega
  \frac{(\B\star u)(\theta)}{\left(\int_{S^1}u(\omega)\dint\omega\right)^2}
\end{equation}
Fix $\eps<\frac{\|u\|_1}{2}$. For all $v\in B(u, \eps)$, the $\Lp^1$-norm ball
around $u$, we have the following inequality:
\begin{equation}
  \left\|D\op{G}_v\right\|_{op}\leq\frac{\|v\|_1}{\|u\|_1-\eps}
  +\sup\kappa\|v\|_1+\frac{\|v\|_1^2}{(\|u\|_1-\eps)^2}
\end{equation}
\end{proof}

The above lemma with the standard Picard-Lindel\"{o}f argument is sufficient to
show local existence on $\Lp^1\times\Rn^+$. To show global existence, we need
to show that $\op{G}(u)$ is uniformly Lipschitz on its domain. Define the
closed set $\mathcal{U}(c, c^\prime)$ to be:
\begin{equation*}
  \mathcal{U}(v):=\{v\in\Lp^1
    :v(\theta)\geq0\;\text{and}\;0<c\leq\|v\|_1\leq c^\prime<\infty\}
\end{equation*}

\begin{lemma}
  $\op{G}$ is uniformly Lipschitz on $\mathcal{U}(c, c^\prime)$ for every
  $0<c<c^\prime<\infty$.
\end{lemma}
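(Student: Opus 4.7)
The plan is to derive a uniform bound on the operator norm of the Fr\'echet derivative $D\op{G}_u$ over $u\in\mathcal{U}(c,c^\prime)$ and then invoke the mean-value inequality for Banach-space-valued maps. The ingredients have already appeared in the previous lemma; what is new is that the lower bound on $\|u\|_1$ now holds globally, so the local Lipschitz estimate can be upgraded to a uniform one.

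The first observation I would make is that $\mathcal{U}(c,c^\prime)$ is convex: a convex combination of two non-negative functions is non-negative, and by linearity of the integral on non-negative functions its $\Lp^1$-norm equals the same convex combination of the original norms, hence still in $[c,c^\prime]$. This convexity is the real content of the argument, because it guarantees that the line segment joining any two points of $\mathcal{U}(c,c^\prime)$ stays inside $\mathcal{U}(c,c^\prime)$, which is precisely what turns a pointwise derivative bound into a global Lipschitz estimate.

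The second step is to bound $\|D\op{G}_u\|_{op}$ uniformly on $\mathcal{U}(c,c^\prime)$. Starting from \eqref{deriv} and applying Young's inequality $\|\B\star h\|_1\leq\|\B\|_1\|h\|_1=\|h\|_1$ to each term, together with $\|u\|_1\geq c$ in the denominators and $\|\B\star u\|_1\leq\|u\|_1$ in the last term, I would obtain
\begin{equation*}
\|D\op{G}_u(w)\|_1\leq\frac{\|w\|_1}{\|u\|_1}+\sup\kappa\,\|w\|_1+\frac{\|w\|_1\|u\|_1}{\|u\|_1^2}\leq\left(\frac{2}{c}+\sup\kappa\right)\|w\|_1,
\end{equation*}
so that $\|D\op{G}_u\|_{op}\leq M:=2/c+\sup\kappa$ uniformly on $\mathcal{U}(c,c^\prime)$. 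Notably the upper bound $c^\prime$ plays no role, which reflects the fact that $\op{G}$ is only ill-behaved as the total mass approaches zero. The third step is then the standard Banach-space mean-value inequality: for $u_1,u_2\in\mathcal{U}(c,c^\prime)$, setting $u_t=tu_2+(1-t)u_1$ keeps $u_t$ in $\mathcal{U}(c,c^\prime)$ by convexity, and
\begin{equation*}
\op{G}(u_2)-\op{G}(u_1)=\int_0^1 D\op{G}_{u_t}(u_2-u_1)\dint t
\end{equation*}
gives $\|\op{G}(u_2)-\op{G}(u_1)\|_1\leq M\|u_2-u_1\|_1$.

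I anticipate no serious obstacle. The one subtle point worth emphasizing is that convexity of $\mathcal{U}(c,c^\prime)$ is what lets the pointwise derivative bound be integrated along a path that never leaves the safe region; without it, the straight-line path between two admissible points could in principle pass through $\Lp^1$-norms arbitrarily close to zero, rendering the bound on $D\op{G}_u$ useless. Everything else is a routine application of Young's inequality.
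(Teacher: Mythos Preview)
Your proposal is correct and follows the same approach as the paper: bound the operator norm of $D\op{G}_u$ on $\mathcal{U}(c,c^\prime)$ and conclude Lipschitz. The paper simply asserts the bound $\|D\op{G}_u\|_{op}\leq\frac{2}{c}+\|\kappa\|_\infty$ and stops, whereas you spell out the Young-inequality estimates term by term, verify convexity of $\mathcal{U}(c,c^\prime)$, and invoke the mean-value inequality explicitly; these are exactly the details the paper leaves implicit.
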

\begin{proof}
It is easy to see that a coarse estimate for the supremum of the operator norm
is:
\begin{equation*}
  \|D\op{G}_u\|_{op}\leq\frac{2}{c}+\|\kappa\|_\infty
\end{equation*}
which implies that $\op{G}$ is Lipschitz on $\mathcal{U}(c,c^\prime)$.
\end{proof}

Global existence for all $t\geq0$ can be shown by observing that solutions with
positive, integrable initial data stay in $\mathcal{U}(c,c^\prime)$ for some
$c$, $c^\prime$.

\begin{lemma}
  Given initial data $v(\theta)\geq0$, $v\in\Lp^1$, there exists
  $0<c<c^\prime<\infty$ such that solutions to \eqref{zero-dyna} stay in
  $\mathcal{U}(c, c^\prime)$
\end{lemma}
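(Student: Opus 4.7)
The plan is to reduce the lemma to a scalar differential inequality for the total mass $M(t):=\int_{S^1}u(\theta,t)\dint\theta$. Integrating \eqref{zero-dyna} in $\theta$ and using $\|\B\|_1=1$ together with Fubini gives $\int_{S^1}(\B\star u)(\theta,t)\dint\theta=M(t)$, so that the normalized gain term collapses exactly to the constant $b$ and we obtain
\begin{equation*}
\dot{M}(t) = b - \int_{S^1}\kappa(\theta)u(\theta,t)\dint\theta.
\end{equation*}
Because $u\geq 0$ is preserved (see below) and $\inf\kappa\leq\kappa\leq\sup\kappa$, sandwiching the loss term yields the ODE inequality
\begin{equation*}
b - (\sup\kappa)\,M(t) \;\leq\; \dot{M}(t) \;\leq\; b - (\inf\kappa)\,M(t).
\end{equation*}

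Both envelopes are linear scalar ODEs that approach $b/\sup\kappa$ and $b/\inf\kappa$ monotonically from either side. A standard comparison (integrating-factor) argument then produces
\begin{equation*}
\min\bigl(M(0),\,b/\sup\kappa\bigr) \;\leq\; M(t) \;\leq\; \max\bigl(M(0),\,b/\inf\kappa\bigr)
\end{equation*}
throughout the interval of existence. Since \eqref{zero-dyna} is only meaningful for $v\not\equiv 0$, we have $M(0)>0$, so setting $c$ and $c^\prime$ equal to these two strictly positive, finite bounds establishes invariance of $\mathcal{U}(c,c^\prime)$.

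Two preliminary points need attention. For non-negativity, substitute $w(\theta,t):=e^{\kappa(\theta)t}u(\theta,t)$; then
\begin{equation*}
\partial_t w = \frac{b\,e^{\kappa(\theta)t}}{M(t)}(\B\star u)(\theta,t) \geq 0
\end{equation*}
whenever $u\geq 0$ and $M>0$, so $w$ is non-decreasing in $t$ and $u(\theta,t)\geq 0$ is maintained. For the denominator $M(t)$ in \eqref{zero-dyna}, the lower bound above keeps it bounded away from zero, so the nonlinearity remains locally Lipschitz in the sense of the preceding lemma and local solutions extend.

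The main obstacle is the mutual dependence among these ingredients: the mass estimate uses $u\geq 0$, preservation of non-negativity uses $M>0$, and both presuppose a solution on the interval in question. The clean resolution is to work on the maximal interval on which $u\geq 0$ and $M>0$ jointly hold; the ODE comparison shows $M$ cannot escape from $[c,c^\prime]$ on that interval and non-negativity cannot fail, so by the Lipschitz extension established earlier the maximal interval must be $[0,\infty)$. Beyond this bootstrapping, everything reduces to elementary scalar ODE comparison.
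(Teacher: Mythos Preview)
Your proposal is correct and follows essentially the same approach as the paper: integrate \eqref{zero-dyna} in $\theta$, use $\|\B\|_1=1$ to reduce the gain term to the constant $b$, sandwich the capping term by $(\inf\kappa)M$ and $(\sup\kappa)M$, and read off the same explicit bounds $c=\min\{M(0),\,b/\sup\kappa\}$ and $c'=\max\{M(0),\,b/\inf\kappa\}$. Your treatment is in fact slightly more careful than the paper's in two respects: the integrating-factor substitution $w=e^{\kappa t}u$ is a cleaner positivity argument than the paper's pointwise ``$\dot u\geq 0$ wherever $u=0$'' remark, and you explicitly close the bootstrap loop between positivity, the lower mass bound, and the interval of existence, which the paper leaves implicit.
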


\begin{proof}
It is obvious that solutions with positive initial data
remain positive. Simply observe for any angle $\theta^\star$ with
$u(\theta^\star,t)=0$ where $u(\cdot,t)\geq0$:
\begin{equation}
  \dot{u}(\theta,t)=(\B\star u)(\theta,t)-\kappa(\theta)u(\theta,t)
  =(\B\star u)(\theta,t)\geq0
\end{equation}
It is also straightforward to show that there exists $c$ and $c^\prime$ for the
definition of $\mathcal{U}$. First, observe that
\begin{equation}
  \frac{\partial}{\partial t}\int u(\theta,t)\dint\theta
  =\lambda_0-\int \kappa(\theta)u(\theta,t)\dint\theta
\end{equation}
The mean value theorem gives:
\begin{equation}
  (\inf\kappa)\int u(\theta,t)\dint\theta
  \leq\int\kappa(\theta)u(\theta,t)\dint\theta
  \leq(\sup\kappa)\int u(\theta,t)\dint\theta
\end{equation}
We can then write out explicit expressions for $c$ and $c^\prime$:
\begin{equation}
  c=\min\left\{\int u(\theta,0)\dint\theta,\,
  \frac{\lambda_0}{\sup\kappa}\right\}
  \qquad
  c^\prime=\max\left\{\int u(\theta,0)\dint\theta,\,
  \frac{\lambda_0}{\inf\kappa}\right\}
\end{equation}
\end{proof}

Picard-Lindel\"{o}f argument is now sufficient to
show global existence, and we have the following result, stated without proof:

\begin{proposition}
  Given initial data $v(\theta)\geq0$, $v(\theta)\in\Lp^1$. There exists
  $u(\theta, t)$ defined on $\Lp^1\times\Rn^+$ where $u(\theta, 0)=v(\theta)$
  and $\frac{\partial}{\partial t}u(\theta, t) = \op{G}(u(\theta, t))$
\end{proposition}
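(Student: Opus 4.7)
The plan is a standard continuation argument that glues the three preceding lemmas together using Picard-Lindel\"{o}f in the Banach space $\Lp^1(S^1)$. First I would invoke the abstract Picard-Lindel\"{o}f theorem: since the initial datum $v$ is nonnegative and, without loss of generality, not identically zero in $\Lp^1$, the first lemma supplies a neighborhood of $v$ on which $\op{G}$ is Lipschitz, yielding a unique continuously differentiable solution $u:[0,T^*)\to\Lp^1$ with $u(\cdot,0)=v$, defined on a maximal forward interval $[0,T^*)$ with $T^*\in(0,\infty]$.

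To upgrade this to global existence I would argue by contradiction that $T^*=\infty$. Suppose $T^*<\infty$. By the third lemma the trajectory $\{u(\cdot,t):t\in[0,T^*)\}$ stays inside the closed set $\mathcal{U}(c,c^\prime)$ for explicit constants $c,c^\prime$ depending only on $v$, $\kappa$, and the branching rate, so there is no blow-up in $\Lp^1$-norm. On this set the second lemma provides a uniform Lipschitz constant $L=L(c,c^\prime)$ for $\op{G}$ that is independent of $t$. A second application of Picard-Lindel\"{o}f, now starting from the datum $u(\cdot,t_0)\in\mathcal{U}(c,c^\prime)$ at any time $t_0\in[0,T^*)$, then produces a solution on $[t_0,t_0+\delta]$ for some step $\delta=\delta(L)>0$ that does not depend on $t_0$. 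Picking $t_0>T^*-\delta$ extends the solution past $T^*$, contradicting maximality; hence $T^*=\infty$.

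The main obstacle is bookkeeping rather than analysis. One must verify that $\mathcal{U}(c,c^\prime)$ is genuinely forward-invariant: pointwise nonnegativity is preserved (already argued in the proof of the third lemma by checking the sign of $\dot u$ on the zero set of $u$), and both the upper and lower $\Lp^1$ bounds persist under the flow. The subtle point is that the local Lipschitz constant from the first lemma degenerates as $\|u\|_1\to 0$, so the strictly positive lower bound $\|u(\cdot,t)\|_1\geq c$ from the third lemma is essential to keep the uniform step size $\delta$ bounded away from zero along the entire trajectory. Once invariance and the uniform Lipschitz constant are established, the standard open-closed continuation argument immediately finishes the proof.
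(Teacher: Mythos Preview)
Your proposal is correct and follows exactly the route the paper intends: the paper explicitly states this proposition \emph{without proof}, merely remarking that ``Picard--Lindel\"{o}f argument is now sufficient to show global existence'' after the three preparatory lemmas, and your continuation argument is precisely the standard way to cash that out. Your observation that the positive lower bound $\|u(\cdot,t)\|_1\geq c$ is what prevents the Lipschitz constant from blowing up is the one genuinely important point to highlight, and you have it.
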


\subsection{Existence of a Unique Equilibria}

The first result is an existence result:
\begin{proposition}\label{exist-prop}
  A function $u\in\Lp^1$ is an equilibrium of equation \eqref{zero-dyna} if and
  only if it is a solution to the eigenvalue problem:
  \begin{equation}\label{zero-eigen}
    \frac{\B\star u}{\kappa(\theta)}=\mu u(\theta)
  \end{equation}
  where $\mu\neq0$ and $\int_{S^1}u(\omega)\dint\omega\neq0$.
\end{proposition}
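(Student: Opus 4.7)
The plan is a direct algebraic equivalence: the forward direction just reads off the eigenvalue relation from the equilibrium condition, and the reverse direction requires one small scaling observation. There is no deep machinery involved beyond the pointwise positivity of $\kappa$.

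First I would handle the forward direction. If $u\in\Lp^1$ is an equilibrium of \eqref{zero-dyna}, then the nonlinear operator $\op{G}$ is only well-defined when $\int u\dint\omega\neq 0$, so any equilibrium automatically has nonzero total mass. Setting $\op{G}(u)=0$ and rearranging gives
\begin{equation*}
\frac{b}{\int_{S^1}u(\omega)\dint\omega}(\B\star u)(\theta)=\kappa(\theta)u(\theta),
\end{equation*}
and dividing pointwise by $\kappa(\theta)>0$ yields $\frac{\B\star u}{\kappa}=\mu u$ with $\mu=\frac{1}{b}\int u\dint\omega\neq 0$, verifying both hypotheses of the eigenvalue problem.

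For the reverse direction, suppose $u$ solves $\frac{\B\star u}{\kappa}=\mu u$ with $\mu\neq 0$ and $\int u\dint\omega\neq 0$. Substituting $\B\star u=\mu\kappa u$ into $\op{G}(u)$ gives
\begin{equation*}
\op{G}(u)=\Big(\frac{b\mu}{\int u\dint\omega}-1\Big)\kappa(\theta)u(\theta),
\end{equation*}
which vanishes exactly when $\int u=b\mu$. Since any scalar multiple $cu$ is also an eigenfunction with the same eigenvalue $\mu$, the choice $c=b\mu/\int u\dint\omega$ produces a genuine equilibrium, so the set of equilibria of \eqref{zero-dyna} is in one-to-one correspondence with the nontrivial eigenfunctions of the linear operator $u\mapsto\frac{\B\star u}{\kappa}$ (of nonzero eigenvalue), modulo this one-dimensional rescaling.

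I do not expect any serious obstacle: the content is purely algebraic. The only point worth flagging is justifying the $\int u\neq 0$ hypothesis — this is forced in the forward direction by the fact that \eqref{zero-dyna} itself becomes ill-defined otherwise, and it is precisely what allows the normalizing constant $b\mu/\int u$ to be written in the reverse direction.
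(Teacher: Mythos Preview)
Your algebraic argument for the equivalence is correct and mirrors the paper's: both directions are the same rearrangement, and your explicit scaling $c=b\mu/\int u$ is exactly the normalization the paper performs when it says ``Assume $\int_{S^1}u(\omega)\dint\omega=1$.''

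Where the paper's proof diverges from yours is that it does not stop at the equivalence. After establishing the correspondence, it invokes Krein--Rutman (via the machinery of Theorem~\ref{principal} applied to $\op{A}'u=\frac{\B\star u}{\kappa}$) to show that a strictly positive eigenfunction $u_0$ with nonzero integral actually \emph{exists}, and then uses self-adjointness of $\op{A}'$ in the weighted inner product $\langle\cdot,\cdot\rangle_\kappa$ to argue that every other eigenfunction is $\kappa$-orthogonal to $u_0$ and hence must change sign. That extra paragraph is what supplies existence and uniqueness of the positive equilibrium, and is the reason the paper can claim immediately after the statement that the proposition contradicts the multiple-equilibria hypothesis of Weichsel--Schwarz. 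Your proposal proves the literal if-and-only-if but leaves existence and uniqueness untouched; if you want the proposition to carry the weight the paper gives it, you would need to append those two steps.
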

Proposition \ref{exist-prop} is in contradiction to the hypothesis in
\cite{Weichsel2010} that there are multiple equilibrium solutions to
\eqref{zero-dyna}.

\begin{proof}
  Assume you have an equilibrium $u\in\Lp^1$ with $\int
  u(\omega)\dint\omega\neq0$, i.e. $\op{G}(u)=0$. We know that:
  \begin{equation*}
    \frac{(\B\star u)(\theta)}{\int_{S^1}u(\omega)\dint\omega}
    -\kappa(\theta)u(\theta)=0
  \end{equation*}
  Simple algebra gives:
  \begin{equation*}
    \frac{(\B\star u)(\theta)}{\kappa(\theta)}
    =\int\limits_{S^1}u(\omega)\dint\omega\,u(\theta)
  \end{equation*}
  That implies $u$ is an eigenfunction with eigenvalue $\int
  u(\omega)\dint\omega\neq0$. For the other direction, assume that we have:
  \begin{equation*}
    \frac{\B\star u}{\kappa} = \mu u
  \end{equation*}
  and the listed hypotheses above. Simple algebra again:
  \begin{equation*}
    \B\star u - \mu\kappa u = 0
  \end{equation*}
  Assume $\int_{S^1}u(\omega)\dint\omega=1$. This is justified as long as
  $\int_{S^1}u(\omega)\dint\omega\neq0$.

  The existence of at least one positive eigenfunction with non-zero integral
  is ensured by the Krein-Rutman theorem as in Theorem
  \ref{principal}. $v=\mu_0 u_0$ is now an equilibrium to \eqref{zero-dyna}.
  Lemma \ref{positive} ensures that $u>0$.  By the self-adjointness of
  $\op{A}^\prime$, we have the all eigenfunctions $u_k\neq u_0$ are orthogonal to
  $u_0$:
  \begin{equation*}
    0=\langle u_0,u_k\rangle_\kappa
    =\int\limits_{S^1}u_0(\theta)u_k(\theta)\kappa(\theta)\dint\theta
  \end{equation*}
  However, we know that $\kappa u_0>0$. That implies the above can only be true
  if $u_k\equiv0$ almost everywhere or $u_k$ is negative on some set with
  non-zero measure.
\end{proof}

\subsection{Local Stability}

The next result implies stability of \eqref{zero-dyna} in a local sense. Before
the proof, a quick, basic lemma from complex analysis:
\begin{lemma}\label{complex}
  Assume $x\in\Rn$ and $0<x\leq y$. There exists a real function
  $\beta(\alpha,y)>0$ for $\alpha\neq0\in\Rn$ such that:
  \begin{equation}
    \frac{1}{|x+i\alpha|}\leq\frac{1}{x+\beta}
  \end{equation}
\end{lemma}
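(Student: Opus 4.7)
The plan is to reduce the complex-modulus inequality to a real one and then produce an explicit formula for $\beta$. Since $|x+i\alpha|=\sqrt{x^2+\alpha^2}$ and we are looking for a strictly positive $\beta$, both denominators in the claimed inequality are positive, so the statement
\begin{equation*}
  \frac{1}{|x+i\alpha|}\leq\frac{1}{x+\beta}
\end{equation*}
is equivalent to $x+\beta\leq\sqrt{x^2+\alpha^2}$, i.e. to requiring that $\beta\leq\sqrt{x^2+\alpha^2}-x$.

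The key observation I would exploit is a monotonicity fact: for fixed $\alpha\neq 0$, the function $h(t)=\sqrt{t^2+\alpha^2}-t$ is strictly decreasing on $(0,\infty)$, since $h'(t)=\frac{t}{\sqrt{t^2+\alpha^2}}-1<0$. Because $0<x\leq y$, this gives $h(x)\geq h(y)=\sqrt{y^2+\alpha^2}-y$. So it is natural to simply define
\begin{equation*}
  \beta(\alpha,y):=\sqrt{y^2+\alpha^2}-y,
\end{equation*}
which is strictly positive for $\alpha\neq 0$ (a quick rationalization gives $\beta=\alpha^2/(\sqrt{y^2+\alpha^2}+y)>0$) and depends only on $\alpha$ and $y$, as required.

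With this choice, the monotonicity of $h$ immediately yields $\beta=h(y)\leq h(x)=\sqrt{x^2+\alpha^2}-x$, so $x+\beta\leq\sqrt{x^2+\alpha^2}=|x+i\alpha|$, which is the desired inequality. There is no real obstacle here; the only thing to watch is that $\beta$ must be independent of $x$ (only the bound $y$ is available), which is precisely why we evaluate $h$ at $y$ rather than at $x$ and then invoke monotonicity to transfer the bound back to $x$.
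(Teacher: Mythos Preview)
Your proof is correct. Both you and the paper reduce the claim to finding a positive $\beta=\beta(\alpha,y)$ with $(x+\beta)^2\le x^2+\alpha^2$, but you then diverge in how you choose $\beta$. The paper expands the square to $\alpha^2-\beta^2-2x\beta\ge 0$ and sets $\beta=\min\{1,\alpha^2/(1+2y)\}$, verifying the inequality by a short case split (using $\beta^2\le\beta$ when $\beta\le 1$ together with $x\le y$). You instead keep the square root, observe that $h(t)=\sqrt{t^2+\alpha^2}-t$ is strictly decreasing on $(0,\infty)$, and take $\beta=h(y)$, which is in fact the sharpest $x$-independent choice. Your monotonicity argument is slightly cleaner and avoids the piecewise definition; the paper's choice has the minor advantage of a simpler closed form in terms of $\alpha^2$. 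Either route suffices for the application in Theorem~\ref{zero-stable}.
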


\begin{proof}
  Direct calculation shows:
  \begin{align}\nonumber
    \frac{1}{|x+i\alpha|}&\leq\frac{1}{x+\beta} & \iff \\\nonumber
    (x+\beta)^2&\leq x^2+\alpha^2 & \iff \\\nonumber
    0&\leq x^2+\alpha^2-(x+\beta)^2 & \iff \\
    0&\leq \alpha^2-\beta^2-2x\beta
  \end{align}
  Choosing $\beta(\alpha,y)=\min\left\{1,\frac{\alpha^2}{1+2y}\right\}$
  completes the proof.
\end{proof}

\begin{theorem}\label{zero-stable}
  The spectral bound of the linearization around $u_0$, the equilibrium of
  \eqref{zero-dyna}, is strictly less than zero.
\end{theorem}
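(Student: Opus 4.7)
The plan is to write the linearization $L := D\op{G}_{u_0}$ as a rank-one positive perturbation of an operator of the type studied in Section 2. With $M := \int u_0$ and the equilibrium identity $\frac{1}{M}\B\star u_0 = \kappa u_0$, direct differentiation of $\op{G}$ at $u_0$ gives
\begin{equation*}
  L v \;=\; \op{A} v - P v, \qquad \op{A} v := \tfrac{1}{M}\B\star v - \kappa v, \qquad P v := \tfrac{\kappa u_0}{M}\int_{S^1} v.
\end{equation*}
Theorem \ref{principal} combined with this identity shows that $\op{A}$ is a self-adjoint operator on $L^2$ whose simple, isolated principal eigenvalue is $0$, with eigenfunction $u_0 > 0$, and whose remaining spectrum lies strictly below zero. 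I proceed by contradiction: suppose some eigenvalue $\mu$ of $L$ satisfies $\op{Re}(\mu) \geq 0$ with eigenfunction $v$, and split on whether $\int v$ vanishes.

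If $\int v = 0$ then $Pv = 0$, so $\mu$ is an eigenvalue of the self-adjoint $\op{A}$, hence real; the constraint $\mu \geq 0$ together with Theorem \ref{principal} forces $\mu = 0$ and, by simplicity, $v \propto u_0$, contradicting $\int v = 0 \neq M$. Otherwise normalize $\int v = 1$ and rewrite the equation as $v = T_\mu v - \frac{\kappa u_0}{M(\kappa+\mu)}$ with $T_\mu f := \frac{1}{M(\kappa+\mu)}\B\star f$. The value $\mu = 0$ is ruled out by a Fredholm argument, since $\op{A} v = \kappa u_0/M$ would require $\kappa u_0 \in \mathrm{range}(\op{A}) = u_0^\perp$, yet $\langle \kappa u_0, u_0\rangle = \int \kappa u_0^2 > 0$. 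For real $\mu > 0$, the identity $T_\mu u_0 = \frac{\kappa u_0}{\kappa+\mu}$ together with a Collatz--Wielandt estimate gives $\rho(T_\mu) \leq \frac{\sup\kappa}{\sup\kappa+\mu} < 1$, so $(I - T_\mu)^{-1}$ is a positive operator (convergent Neumann series of positive operators); then $v = -(I - T_\mu)^{-1}\frac{\kappa u_0}{M(\kappa+\mu)}$ is strictly negative pointwise, contradicting $\int v = 1$.

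The remaining case is $\op{Re}(\mu) \geq 0$ with $\op{Im}(\mu) \neq 0$, where Lemma \ref{complex} is the tool designed for exactly this step: applied with $x = \kappa(\theta) + \op{Re}(\mu)$, $\alpha = \op{Im}(\mu)$, and $y = \sup\kappa + \op{Re}(\mu)$ it produces a real $\tilde{\mu} > 0$ with $|\kappa+\mu|^{-1} \leq (\kappa+\tilde{\mu})^{-1}$ pointwise in $\theta$. Taking moduli of the eigenvalue equation and using $|\B\star v| \leq \B\star|v|$ yields the sub-solution inequality $(I - T_{\tilde{\mu}})|v| \leq \frac{\kappa u_0}{M(\kappa+\tilde{\mu})}$, and because $\rho(T_{\tilde{\mu}}) < 1$ the positive Neumann series dominates $|v|$ by the positive function $w := (I - T_{\tilde{\mu}})^{-1}\frac{\kappa u_0}{M(\kappa+\tilde{\mu})}$. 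The hardest step, which I expect to be the main obstacle, is converting this modulus bound into a genuine contradiction with the normalization $\int v = 1$: the triangle and convolution inequalities used are strict unless $v$ has constant phase and $\B\star v$ points opposite to $\kappa u_0$, which would force $v$ real and nonpositive and hence be incompatible with $\int v = 1 > 0$. Making this strictness quantitative---likely via the strong positivity of the iterates $T_{\tilde{\mu}}^n$ from Lemma \ref{positive} paired against the adjoint Perron eigenfunction of $T_{\tilde{\mu}}$, so that the sub-solution bound can be upgraded to $\int|v| < \int w$ and then combined with $|\!\int v| \leq \int|v|$---should close the proof.
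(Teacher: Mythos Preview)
Your decomposition $L=\op{A}-P$ with $\op{A}$ self-adjoint and $P$ rank one is a clean setup, and the cases $\int v=0$, $\mu=0$, and real $\mu>0$ are handled correctly; the Collatz--Wielandt estimate $\rho(T_\mu)\le\tfrac{\sup\kappa}{\sup\kappa+\mu}$ followed by the sign contradiction $\int v<0$ is a nice argument that is not in the paper. One omission worth patching: you argue only against eigenvalues, so you should record that since $P$ is compact, $L$ and $\op{A}$ share their essential spectrum, which by Lemma~\ref{eigenvalues} sits in $[-\sup\kappa,-\inf\kappa]\subset(-\infty,0)$; this is what justifies restricting attention to point spectrum in the closed right half-plane.

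The genuine gap is the complex case, and the sub-solution route you sketch cannot close it. The bound $|v|\le w=(I-T_{\tilde\mu})^{-1}\tfrac{\kappa u_0}{M(\kappa+\tilde\mu)}$ gives only $\bigl|\int v\bigr|\le\int|v|\le\int w$, and $\int w$ is \emph{not} controlled by $1$: already for constant $\kappa\equiv c$ one computes $w=c/(2\pi\tilde\mu)$, hence $\int w=c/\tilde\mu\to\infty$ as $\tilde\mu\to0^+$. Your proposed strictness upgrade---pairing against the adjoint Perron vector to obtain $\int|v|<\int w$---therefore still cannot contradict $\int v=1$. The obstruction is structural: passing to $|v|$ throws away precisely the phase information that makes the rank-one term harmless, and no amount of strong positivity of the iterates $T_{\tilde\mu}^n$ recovers it once $\tilde\mu$ is small.

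The paper's route for $\Re\gamma\ge0$, $\gamma\neq0$, is different and sidesteps this. Rather than excluding eigenfunctions, it constructs the resolvent directly by showing that the Neumann series $\sum_k\op{B}_\gamma^k$ converges: Lemma~\ref{complex} is invoked not to dominate $|v|$ pointwise but to bound the \emph{numerical radius}
\[
n(\op{B}_\gamma)=\sup_{\|v\|_2=1}\bigl|\langle\op{B}_\gamma v,v\rangle\bigr|
\le\sup_{\|v\|_2=1}\int\frac{|(\B\star v)(\theta)\,v(\theta)|}{\kappa(\theta)+\beta}\dint\theta,
\]
which is controlled by the real-shift case already treated. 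Since $r(\op{B}_\gamma)\le n(\op{B}_\gamma)<1$, the series converges and $\gamma$ lies in the resolvent. This keeps enough of the complex structure to avoid the $\tilde\mu\to0$ blow-up that defeats your modulus comparison. The case $\gamma=0$ is then handled separately by exactly the Fredholm alternative argument you give.
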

Define the spectral bound to be:
\begin{equation}
  \s(\op{D})=\sup\{\Re\mu:\mu\in\sigma(\op{D})\}
\end{equation}
We will show that the right half of the complex plane is contained in the
resolvent. Define $\op{D}=D\op{G}_{u_0}$. For these purposes, we will only
consider $\op{D}$ as an operator over $\Lp^2$, but the results are immediately
generalizable to $\Lp^1$ using the techniques in the proof of Theorem
\ref{principal}. For all $\gamma$ such that $\Re\gamma\geq0$ and $\gamma\neq0$,
the proof will consist of constructing a Neumann-type series and showing the
series converges to a bounded operator. The proof is completed by showing
$0\notin\sigma(D\op{G}_{u_0})$.

\begin{lemma}
  The line ${\gamma\in\Rn:\,\gamma>0}$ is in $\rho(\op{D})$.
\end{lemma}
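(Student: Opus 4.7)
The goal is, for every real $\gamma>0$, to solve $(\gamma-\op{D})v=f$ and verify that the resulting map $f\mapsto v$ is bounded on $\Lp^2$. The equilibrium relation $\B\star u_0=N\kappa u_0$ with $N:=\int u_0$, substituted into the derivative formula \eqref{deriv}, reduces the linearization to
\[
\op{D}v=\frac{\B\star v}{N}-\kappa v-\frac{\kappa u_0}{N}\int_{S^1}v(\omega)\dint\omega,
\]
so that the resolvent equation rearranges as
\[
\op{L}_\gamma v:=(\gamma+\kappa)v-\frac{\B\star v}{N}=f-\frac{M}{N}\kappa u_0,\qquad M:=\int_{S^1}v.
\]
This exhibits $\gamma-\op{D}$ as a rank-one perturbation of the principal operator $\op{L}_\gamma$. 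The plan is to invert $\op{L}_\gamma$ by a Neumann-series/positivity argument and then close the scalar loop for $M$ by a Sherman-Morrison type identity.

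First I would invert $\op{L}_\gamma$. Dividing by $\gamma+\kappa$ reduces matters to inverting $I-\op{B}_\gamma$, where $\op{B}_\gamma v:=\B\star v/[N(\gamma+\kappa)]$. The key observation is the factorization $\op{B}_\gamma v=\frac{\kappa(\theta)}{\gamma+\kappa(\theta)}\widetilde{\op{B}}v$ with $\widetilde{\op{B}}v:=\B\star v/(N\kappa)$. Because $\widetilde{\op{B}}$ is a positive rescaling of the operator $\op{A}'$ from Proposition \ref{small-prop}, Theorem \ref{principal} gives it a simple positive principal eigenvalue equal to its spectral radius; since the equilibrium condition makes $u_0>0$ an eigenfunction of $\widetilde{\op{B}}$ with eigenvalue $1$, we must have $\rho(\widetilde{\op{B}})=1$. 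Setting $c:=\sup_\theta \kappa/(\gamma+\kappa)<1$, positivity of both operators yields $\op{B}_\gamma^n v\leq c^n\widetilde{\op{B}}^n v$ pointwise for $v\geq 0$ by a straightforward induction, so $\rho(\op{B}_\gamma)\leq c<1$. The Neumann series then produces a bounded, positivity-preserving inverse $\op{T}_\gamma:=(I-\op{B}_\gamma)^{-1}$, and $\op{L}_\gamma^{-1}g=\op{T}_\gamma\bigl(g/(\gamma+\kappa)\bigr)$ is a bounded positive operator.

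Next I would close the loop for the scalar $M$. Writing $g:=\op{L}_\gamma^{-1}f$ and $h:=\op{L}_\gamma^{-1}(\kappa u_0/N)$, the rearranged equation becomes $v=g-Mh$; integrating over $S^1$ yields $M=\int g/(1+\int h)$. Positivity of $\op{L}_\gamma^{-1}$ together with $\kappa u_0\geq 0$ forces $h\geq 0$, so $1+\int h\geq 1$ and the denominator is safely nonzero. The function $v=g-Mh$ is then uniquely determined and depends boundedly on $f$, which is exactly $\gamma\in\rho(\op{D})$.

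The main obstacle is the strict spectral inequality $\rho(\op{B}_\gamma)<1$; everything downstream is finite-dimensional rank-one bookkeeping in which positivity is precisely what saves the Sherman-Morrison denominator. The spectral inequality itself rests on two ingredients already in hand: the equilibrium condition pins $\rho(\widetilde{\op{B}})$ at $1$ via Theorem \ref{principal}, and the multiplier $\kappa/(\gamma+\kappa)$ is uniformly below $1$ whenever $\gamma>0$.
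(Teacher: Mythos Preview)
Your argument is correct and, in fact, more complete than the paper's own proof. Both you and the paper invert the principal part by the same Neumann-series idea: pass from $(\gamma+\kappa)v-\tfrac{1}{N}\B\star v$ to $(I-\op{B}_\gamma)v$ and show $\rho(\op{B}_\gamma)<1$. The paper simply asserts this strict inequality ``from previous results'' by comparing with $\rho(\B\star/\kappa)=1$; you give an explicit proof via the pointwise domination $\op{B}_\gamma^n\le c^n\widetilde{\op{B}}^n$ on the positive cone with $c=\sup_\theta\kappa/(\gamma+\kappa)<1$ and $\rho(\widetilde{\op{B}})=1$ fixed by the equilibrium condition. That is a clean, self-contained justification.

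The real difference is that the paper's displayed resolvent equation \eqref{res-rearrange} silently drops the rank-one term $-(\int v)\,\kappa u_0/N$ coming from the derivative \eqref{deriv} (and also suppresses the factor $N$). You keep that term and close the loop with a Sherman--Morrison identity: write $v=\op{L}_\gamma^{-1}f-M\,\op{L}_\gamma^{-1}(\kappa u_0/N)$, integrate, and solve for $M$. Your observation that $\op{L}_\gamma^{-1}$ is positivity-preserving (as a Neumann series of positive operators) forces $\int h\ge 0$ and makes the denominator $1+\int h\ge 1$, so the rank-one correction is harmless. This is exactly the piece the paper's argument glosses over; your treatment is the honest one.
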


\begin{proof}
  First, fix $\gamma>0$ with $\gamma\in\Rn$.  We can solve explicitly for the
  resolvent. Fix $f$ in $\Lp^2$ and assume there exists $v$ such that:
  \begin{equation}
    \op{A}v-\gamma v = f
  \end{equation}
  We will derive a Neumann-type series to show the existence of such a $v$.
  Expanding out $\op{D}$ and rearranging gives:
  \begin{equation}\label{res-rearrange}
    v-\frac{\B\star v}{\kappa+\gamma}=-\frac{f}{\kappa+\gamma}
  \end{equation}
  From previous results, it is easy to see that:
  \begin{equation}
    r\left(\frac{\B\star}{\kappa+\gamma}\right)<
    r\left(\frac{\B\star}{\kappa}\right)=1
  \end{equation}
  where $r(\cdot)$ is the spectral radius. Define
  $\op{B}_\gamma=\frac{\B\star}{\kappa+\gamma}$. The usual Neumann series gives
  us the explicit form for $(\op{A}-\gamma)^{-1}$:
  \begin{equation}\label{Neumann}
    v=-\sum_{j=0}^\infty\op{B}_\gamma^j \frac{f}{\kappa+\gamma}
  \end{equation}
  The convergence of the above series implies that $(\op{D}-\gamma)^{-1}$ is a
  bounded operator and $\gamma\in\rho(\op{D})$, the resolvent.
\end{proof}

The above argument holds equally well for $\gamma\in\Cn$ where $\Re\gamma\geq0$
and $\gamma\neq0$.

\begin{lemma}
  The set ${\gamma\in\Cn:\,\gamma\neq0 and \Re\gamma\geq0}$ is in
  $\rho(\op{D})$.
\end{lemma}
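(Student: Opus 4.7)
The plan is to extend the Neumann-series construction from the previous lemma to all $\gamma\in\Cn$ with $\Re\gamma\geq0$ and $\gamma\neq0$. Since $\Re(\kappa+\gamma)=\kappa+\Re\gamma>0$, the factor $1/(\kappa+\gamma)$ is a well-defined bounded complex-valued function on $S^1$, and the algebra leading to \eqref{res-rearrange} and the representation \eqref{Neumann} carries over verbatim. The entire proof then reduces to showing that the complex operator $\op{B}_\gamma u=\frac{\B\star u}{\kappa+\gamma}$ on $\Lp^2$ has spectral radius strictly less than $1$, because the Neumann series $\sum_{j\geq0}\op{B}_\gamma^j$ will then converge in operator norm and produce a bounded $(\op{D}-\gamma)^{-1}$.

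To bound $r(\op{B}_\gamma)$ I would dominate it by a real positive operator. Taking pointwise moduli gives $|\op{B}_\gamma u(\theta)|\leq\frac{(\B\star|u|)(\theta)}{|\kappa(\theta)+\gamma|}$. Applying Lemma \ref{complex} with $x(\theta)=\kappa(\theta)+\Re\gamma$, $y=\sup\kappa+\Re\gamma$, and $\alpha=\Im\gamma$: the case $\Im\gamma=0$ forces $\Re\gamma>0$ and reduces to the previous lemma, while the case $\Im\gamma\neq0$ produces a constant $\beta>0$ independent of $\theta$ with $\frac{1}{|\kappa(\theta)+\gamma|}\leq\frac{1}{\kappa(\theta)+\delta}$, where $\delta=\Re\gamma+\beta>0$. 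Setting $\op{C}u=\frac{\B\star u}{\kappa+\delta}$, I obtain $|\op{B}_\gamma u|\leq\op{C}|u|$ pointwise and hence by induction $|\op{B}_\gamma^n u|\leq\op{C}^n|u|$. Passing to $\Lp^2$ norms and invoking Gelfand's formula yields $r(\op{B}_\gamma)\leq r(\op{C})$.

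The main obstacle is then the strict inequality $r(\op{C})<1$. Since $\op{C}$ is compact and positive on $C(S^1)$, Krein-Rutman supplies a positive eigenfunction $w$ with $\op{C}w=r(\op{C})w$, i.e.\ $\B\star w=r(\op{C})(\kappa+\delta)w$. Dividing pointwise by $\kappa$ gives $\op{A}^\prime w=r(\op{C})\frac{\kappa+\delta}{\kappa}w>r(\op{C})w$, where $\op{A}^\prime=\B\star/\kappa$ is the operator of Proposition \ref{small-prop}, whose principal eigenvalue equals $1$ (the normalized equilibrium $u_0$ of \eqref{zero-dyna} is a positive eigenfunction with eigenvalue $1$ by Proposition \ref{exist-prop}). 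The irreducibility of $\op{A}^\prime$ provided by Lemma \ref{positive} together with the strict pointwise inequality forces $1=r(\op{A}^\prime)>r(\op{C})$ by a standard Collatz--Wielandt argument. Hence $r(\op{B}_\gamma)<1$, the Neumann series converges, $(\op{D}-\gamma)^{-1}$ exists as a bounded operator, and $\gamma\in\rho(\op{D})$.
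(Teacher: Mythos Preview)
Your argument is correct and shares the overall architecture of the paper's proof: both reduce via Lemma \ref{complex} to a real comparison operator $\op{B}_{\beta}$ (your $\op{C}$) with $\beta>0$, and then conclude that the Neumann series \eqref{Neumann} converges because $r(\op{B}_\gamma)<1$. The difference lies in how that spectral-radius bound is obtained. The paper bounds the \emph{numerical radius} $n(\op{B}_\gamma)$ directly, using $|\langle \op{B}_\gamma v,v\rangle|\leq\int\frac{|(\B\star v)v|}{\kappa+\beta}$ and invoking the general inequality $r\leq n$; this is quick but intrinsically Hilbert-space. You instead use a pointwise domination $|\op{B}_\gamma u|\leq\op{C}|u|$, iterate, and pass to $r(\op{B}_\gamma)\leq r(\op{C})$ via Gelfand's formula, which is a Banach-lattice style argument that would carry over unchanged to $\Lp^1$. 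You also supply, via the Collatz--Wielandt comparison with $\op{A}^\prime$, an explicit reason for the strict inequality $r(\op{C})<1$, whereas the paper simply quotes this from the preceding lemma. Either route is sound; yours is a bit longer but arguably cleaner and more self-contained.
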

\begin{proof}
  Equation \eqref{res-rearrange} is equally valid for complex $\gamma$. Assume
  $\Re\gamma\geq0$ and $\gamma\neq0$. In order to show that \eqref{Neumann}
  still holds, we need to show that the spectral radius is strictly less than
  one. Using Lemma \ref{complex}, we can provide a bound on the numerical
  radius $n(\cdot)$:
  \begin{align}\nonumber
    n(\op{B}_\gamma)=\sup_{\|v\|_2=1}|\langle\op{B}v,v\rangle|
    &=\sup_{\|v\|_2=1}\left|\int\frac{(\B\star v)(\theta)}{\kappa(\theta)+\gamma}
    v(\theta)\dint\theta\right|\\\nonumber
    &\leq\sup_{\|v\|_2=1}\int
    \frac{|(\B\star v)(\theta)v(\theta)|}{|\kappa(\theta)+\gamma|}\dint\theta
    \\
    &\leq\sup_{\|v\|_2=1}\int
    \frac{|(\B\star v)(\theta)v(\theta)|}{\kappa(\theta)+\beta}\dint\theta<1
  \end{align}
  The Neumann series in equation \eqref{Neumann} therefore converges and
  $\gamma\in\rho(\op{D})$ for all $\gamma\neq0$ where $\Re\gamma\geq0$.
\end{proof}

The last remaining case to prove is that $0\in\rho(\op{D})$.
\begin{lemma}
  $0\in\rho(\op{D})$
\end{lemma}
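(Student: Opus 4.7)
The plan is to exhibit $\op{D}$ as a compact perturbation of an invertible multiplication operator, and then close the gap with the Fredholm alternative, using the self-adjointness of $\op{A}^\prime=\B\star\cdot/\kappa$ on $\langle\cdot,\cdot\rangle_\kappa$ and the simplicity of its principal eigenvalue to rule out a nontrivial kernel.

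First I would unpack $\op{D}=D\op{G}_{u_0}$. Setting $\mu=\int u_0$, the equilibrium condition $\op{G}(u_0)=0$ gives $\B\star u_0=\mu\kappa u_0$, so $u_0$ is the Krein-Rutman eigenfunction of $\op{A}^\prime$ with eigenvalue $\mu>0$. Substituting into \eqref{deriv},
\begin{equation*}
  \op{D}v = \frac{\B\star v}{\mu}-\kappa v-\frac{\int v}{\mu}\,\kappa u_0,
\end{equation*}
which I would split as $\op{D}=-\op{M}_\kappa+\op{K}$, where $\op{M}_\kappa$ is multiplication by $\kappa$ and $\op{K}v=\mu^{-1}\B\star v-\mu^{-1}(\int v)\kappa u_0$. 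Because $\kappa$ is continuous and strictly positive on $S^1$, $\op{M}_\kappa$ is boundedly invertible on $\Lp^2$; the convolution term is compact by the Hilbert-Schmidt argument already given, and $v\mapsto(\int v)\kappa u_0$ is of rank one. Hence $\op{K}$ is compact, $\op{D}$ is Fredholm of index zero, and by the Fredholm alternative it suffices to show $\ker\op{D}=\{0\}$.

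Next, suppose $\op{D}v=0$, equivalently
\begin{equation*}
  \op{A}^\prime v-\mu v = \Bigl(\textstyle\int v\Bigr)u_0.
\end{equation*}
Pairing with $u_0$ in $\langle\cdot,\cdot\rangle_\kappa$ and using the self-adjointness of $\op{A}^\prime$ from the proof of Proposition \ref{small-prop}, the left-hand side equals $\langle v,\op{A}^\prime u_0-\mu u_0\rangle_\kappa=0$, while the right-hand side evaluates to $(\int v)\int\kappa u_0^2$. Strict positivity of $\kappa$ and $u_0$ forces $\int v=0$, reducing the equation to $(\op{A}^\prime-\mu\op{I})v=0$. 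Simplicity of the principal eigenvalue of $\op{A}^\prime$ (the same Krein-Rutman step invoked in Proposition \ref{exist-prop}) then gives $v=cu_0$ for some scalar $c$, and the constraint $\int v=0$ forces $c=0$.

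The only real subtlety is that the multiplication-plus-convolution piece $-\op{M}_\kappa+\mu^{-1}\B\star\cdot=-\op{M}_\kappa(\op{I}-\mu^{-1}\op{A}^\prime)$ is \emph{singular} at precisely the principal eigenvalue $\mu$ of $\op{A}^\prime$; the rank-one correction $-\mu^{-1}(\int v)\kappa u_0$, together with the constraint $\int v=0$ extracted by pairing against $u_0$ in $\langle\cdot,\cdot\rangle_\kappa$, is exactly what removes this degeneracy. Everything else is routine Fredholm theory together with the spectral information established earlier.
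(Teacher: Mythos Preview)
Your proof is correct and follows essentially the same route as the paper: both reduce $\op{D}v=0$ to a Fredholm-alternative argument against the self-adjoint operator whose kernel is spanned by $u_0$, using strict positivity of $u_0$ and $\kappa$ to force $\int v=0$ and then simplicity of the principal eigenvalue to conclude $v=0$. Your version is more explicit about why injectivity suffices---the compact-perturbation-of-invertible step establishing Fredholm index zero---which the paper leaves implicit when it simply invokes ``the Fredholm alternative.''
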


\begin{proof}
  Assume $0\in\sigma(D\op{G}_{u_0})$. That would imply that:
  \begin{equation}
    \B\star v-\kappa v=\left(\int v(\omega)\dint\omega\right)\kappa u_0
  \end{equation}
  The nullspace of the left hand side is spanned by $u_0$. Since $v=u_0$ does
  not solve the equation, we must have $\int v\neq0$. However, the Fredholm
  alternative states that the above is only solvable if the right hand side is
  perpendicular to $u_0$. As $\kappa u_0$ is not perpendicular to $u_0$, the
  equation is not solvable.
\end{proof}

Combining the three lemmas proves Theorem \ref{zero-stable}.

\section{Approximate Methods}

The stability result in Theorem \ref{zero-stable} is a local result, so
approximate methods were used to characterize the global behavior of
\eqref{zero-dyna}. First, a set of numerical simulations were performed that
showed global exponential stability. Then, we present a perturbation expansion
that provides evidence towards global stability along with a discussion of the
limitations of the perturbation.

\subsection{Numerical Simulations}

The two branching kernels were:
\begin{gather} \nonumber
  \B_1(\theta)=\frac{\phi(\theta+\frac{\pi}{2})+\phi(\theta-\frac{\pi}{2})}{2}
  \\\nonumber
  \text{and} \\
  \B_2(\theta)\propto
  \begin{cases}
    1-\theta^6+3\theta^4-3\theta^2 & \text{if}\;|\theta|<1 \\
    0                             & \text{if}\;|\theta|\geq1
  \end{cases}
\end{gather}
where $\B_2$ was normalized to have integral one and $\phi(\theta)$ was a von
Mises distribution:
\begin{equation}
  \phi(\theta)=\frac{\exp\left[\sigma^{-2}\cos(\theta)\right]}
  {2\pi I_0(\sigma^{-2})}
\end{equation}
where $\sigma=\frac{7\pi}{180}$. $\B_1$ was based on the branching kernels used
in \cite{Maly2001} and \cite{Weichsel2010}. Those two papers used truncated
Gaussian distributions centered around $\pm70^\circ$. Here, the von Mises
distribution was used to avoid truncating the Gaussian or using the more
complicated, formally correct wrapped Gaussian distribution. Also, the offset
of $\pm\frac{\pi}{2}$ was used to simplify the radians conversion. Finally, the
constant $\sigma$ was chosen to be in line with previous numerical studies
\cite{Schaus2007, Schaus2008, Weichsel2010}. Both branching kernels were $C^2$
and symmetric.

The two capping functions used were:
\begin{equation}
  \kappa_1(\theta) = 1 - \frac{1}{2}\cos(\theta)
  \qquad
  \text{and}
  \qquad
  \kappa_2(\theta) = 1 + \frac{3}{4}\cos(4\theta^2)
\end{equation}
The first capping function, $\kappa_1$ was based upon \cite{Weichsel2010}, and
the second was chosen to have multiple minima and maxima and non-uniform
oscillations.

Finally, the four initial conditions chosen were:
\begin{align}\nonumber
  u_1(\theta) &= 1 &
  u_2(\theta)&=\frac{1}{\left|\theta-\frac{\pi}{3}\right|^{\nicefrac{1}{2}}}
  \\
  u_3(\theta)&=\mathbf{1}_{\left(\frac{7\pi}{8},\pi\right)}
    +\mathbf{1}_{\left(-\frac{3\pi}{4},-\frac{2\pi}{3}\right)} &
  u_4(\theta)&=
  \begin{cases}
    -\frac{\theta}{\pi} & \text{if}\;\theta<0 \\
    1-\frac{\theta}{\pi} & \text{if}\;\theta\geq0
  \end{cases}
\end{align}
They were chosen to include a mix of symmetric, non-symmetric, smooth and
non-smooth functions. Also, $u_2$ was chosen so that
$u_2\in(\Lp^1\backslash\Lp^2)$.

\begin{figure}
  \begin{minipage}{0.48\linewidth}
    A)

    \begin{center}
      \includegraphics[width=0.9\linewidth]{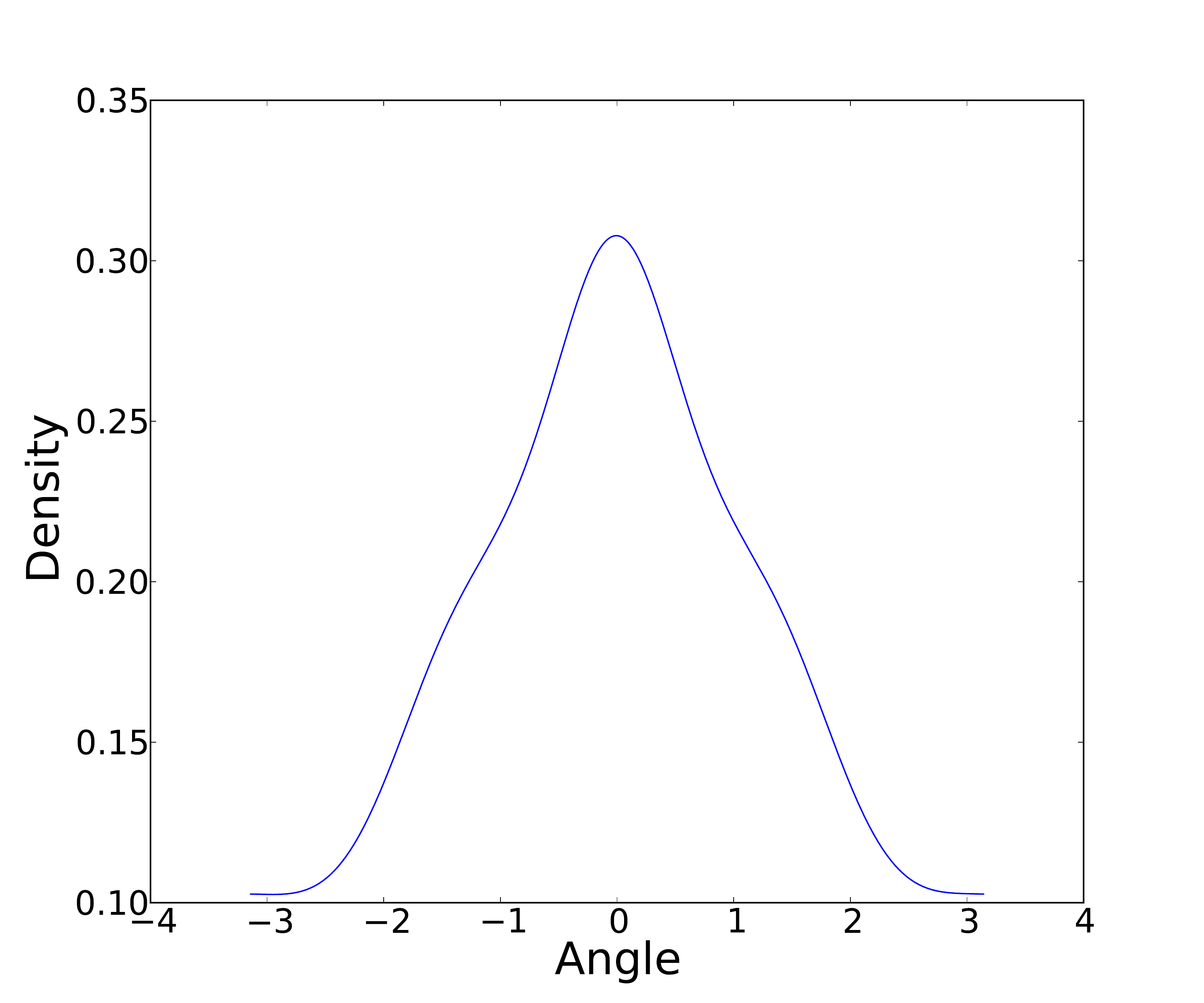}
    \end{center}
  \end{minipage}
  \begin{minipage}{0.48\linewidth}
    B)

    \includegraphics[width=0.9\linewidth]{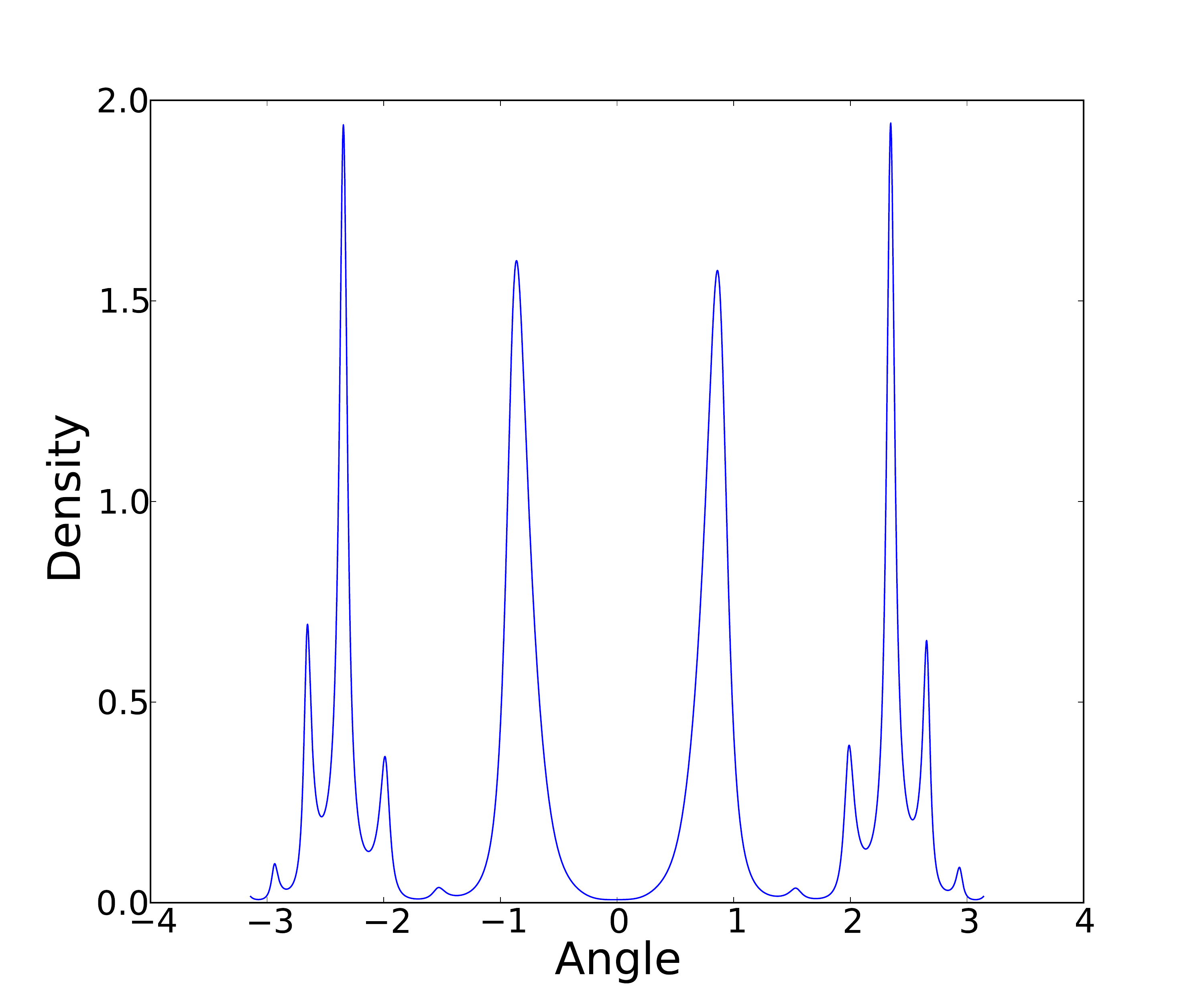}
  \end{minipage}

  \begin{minipage}{0.48\linewidth}
    C)

    \begin{center}
      \includegraphics[width=0.9\linewidth]{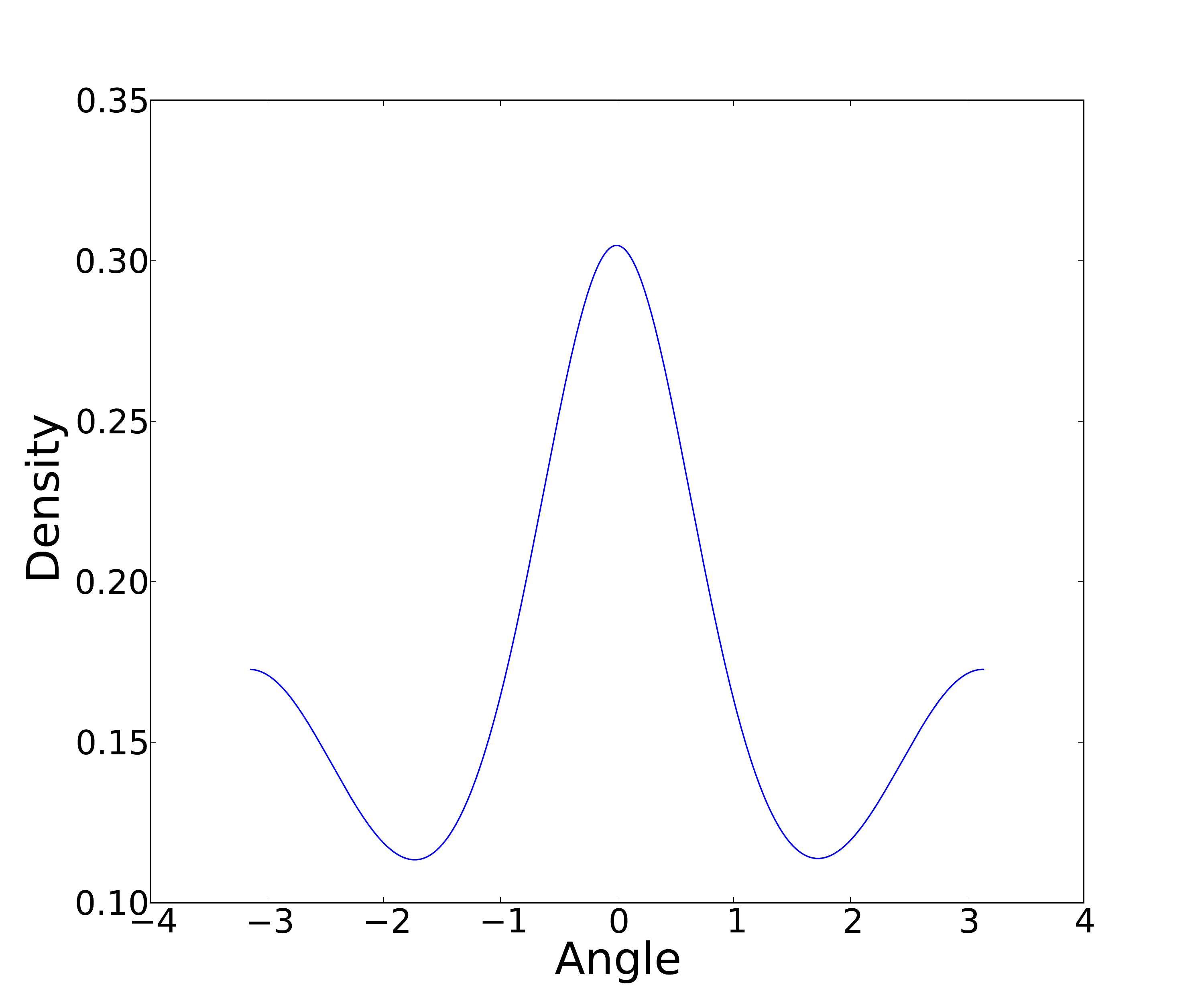}
    \end{center}
  \end{minipage}
  \begin{minipage}{0.48\linewidth}
    D)

    \begin{center}
      \includegraphics[width=0.9\linewidth]{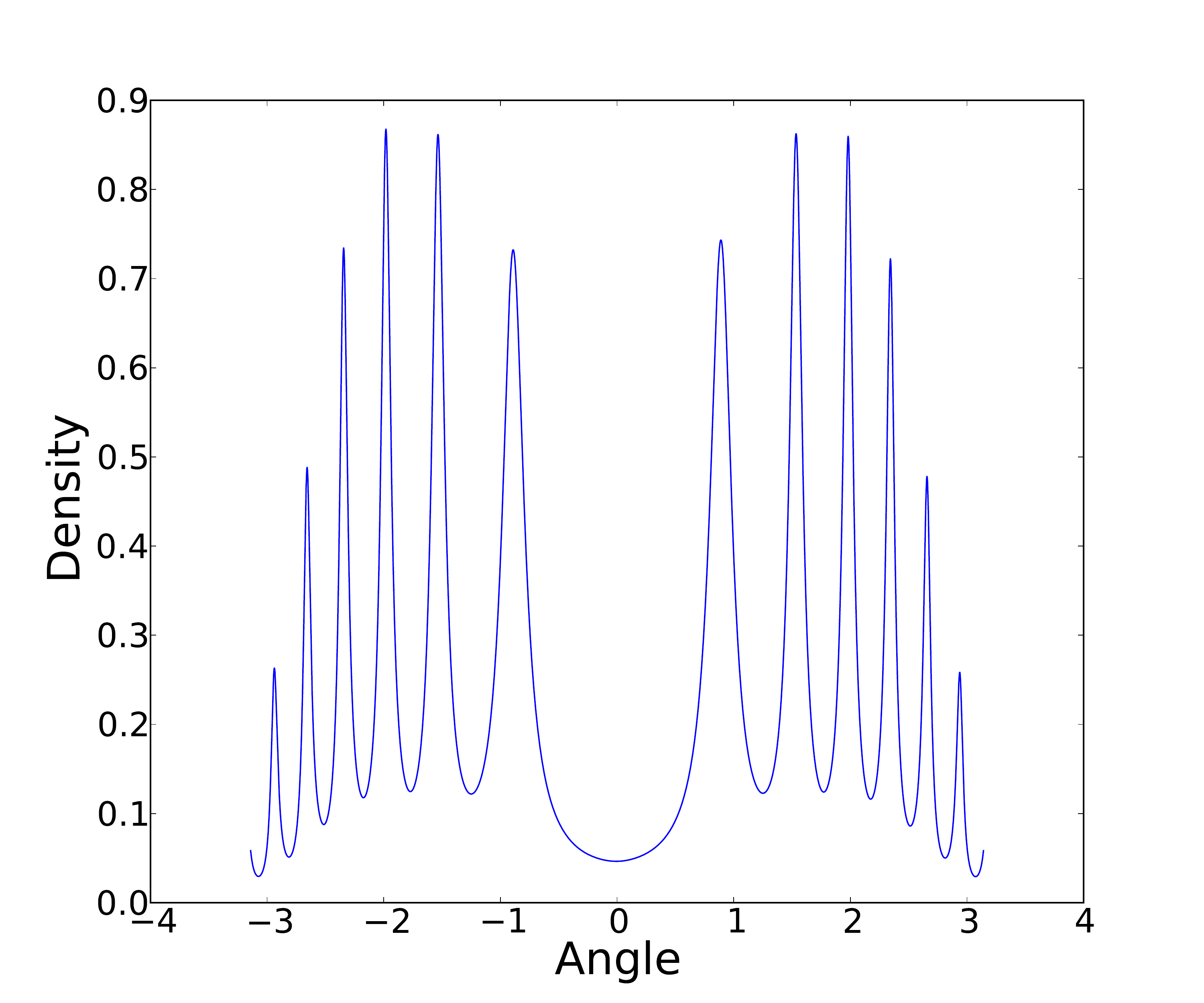}
    \end{center}
  \end{minipage}
  \caption[Equilibrium orientation distributions]{The plots here show the
    equilibrium distributions calculated using the method in section
    \ref{calc-equib} for the four systems. A) $\B_1$ and $\kappa_1$, B) $\B_1$
    and $\kappa_2$, C) $\B_2$ and $\kappa_1$, and D) $\B_2$ and
    $\kappa_2$.}\label{equib-dens-fig}
\end{figure}

The first calculations run were to estimate the equilibrium distribution using
the method in section \ref{calc-equib}. The method was iterated until the
$\|\cdot\|_1$ difference between successive iterations was less than double
precision.  The equilibrium distribution appeared to have a qualitatively
stronger dependence on $\kappa$ than on $\B$ as can be seen in Figure
\ref{equib-dens-fig}.

\begin{figure}[h]
  \begin{minipage}{0.48\linewidth}
    A)

    \includegraphics[width=\linewidth]{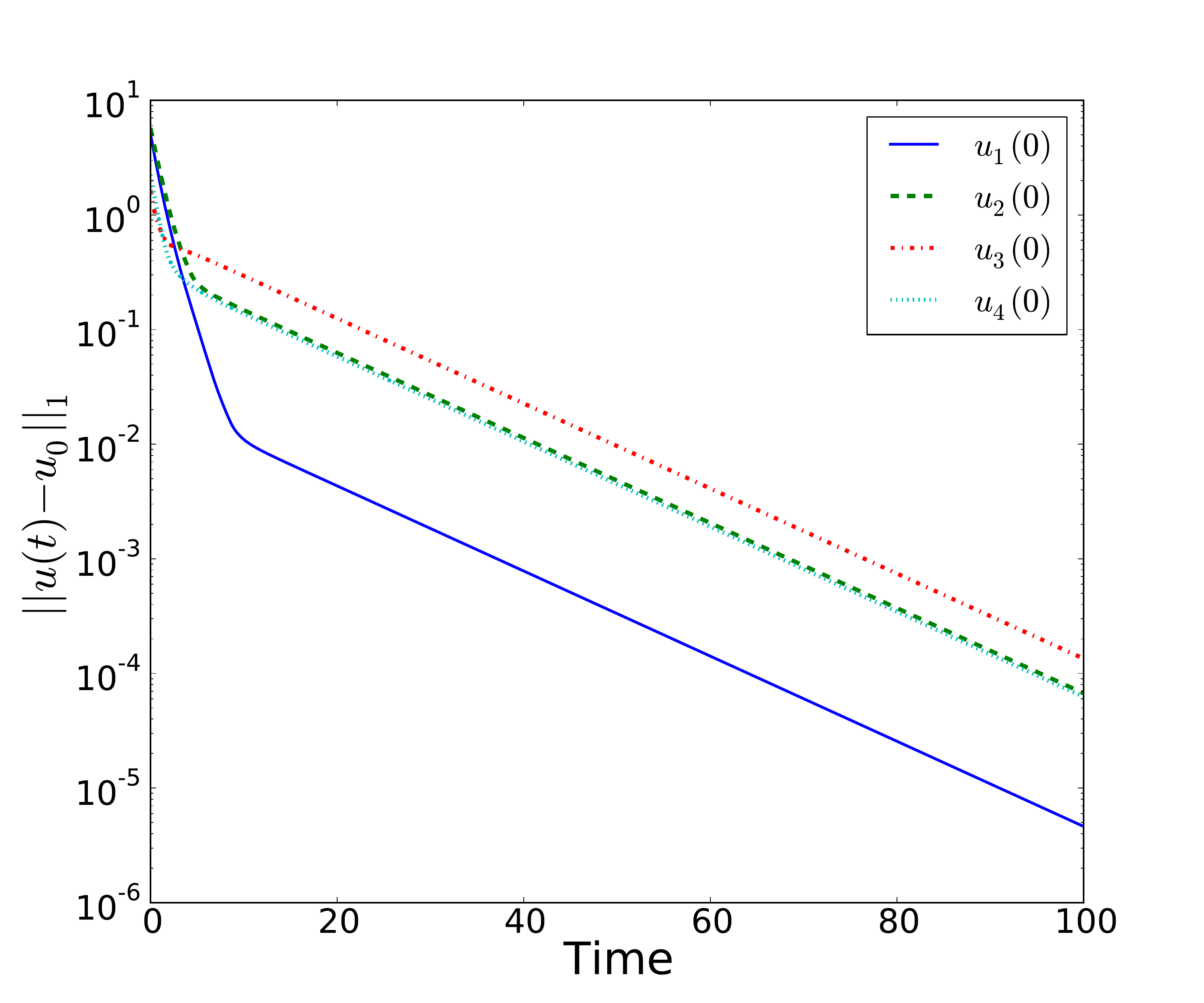}
  \end{minipage}
  \begin{minipage}{0.48\linewidth}
    B)

    \includegraphics[width=\linewidth]{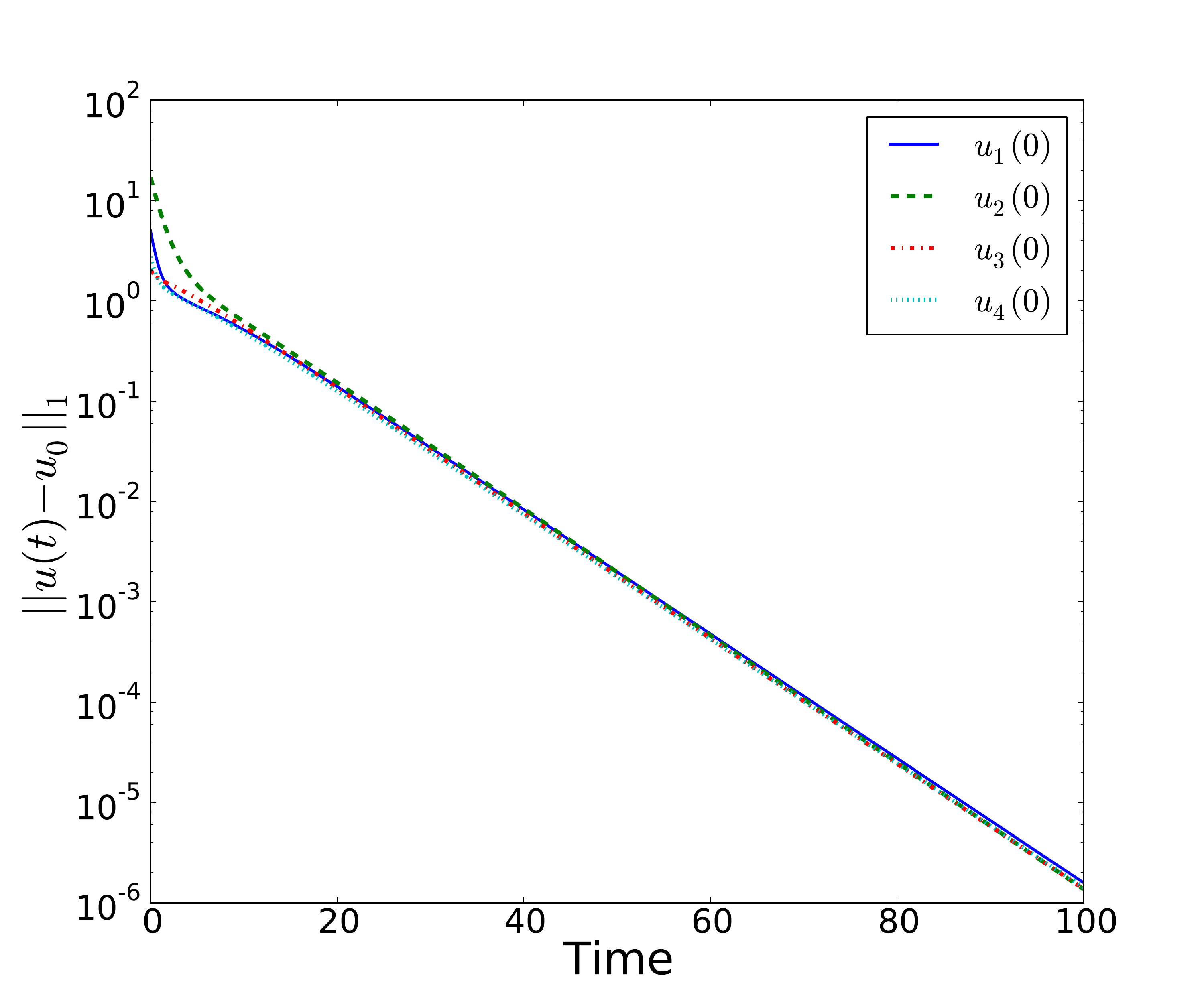}
  \end{minipage}

  \begin{minipage}{0.48\linewidth}
    C)

    \includegraphics[width=\linewidth]{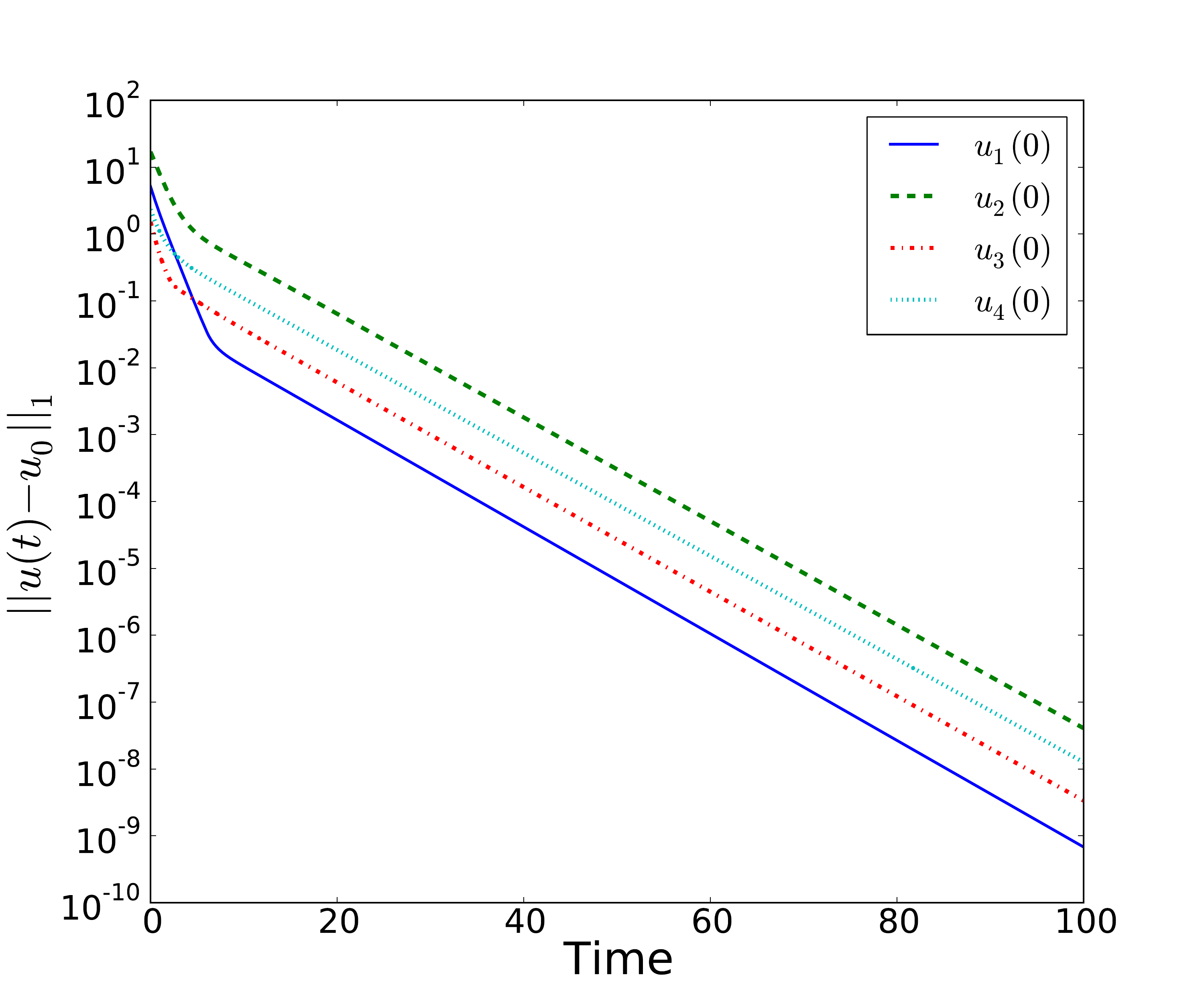}
  \end{minipage}
  \begin{minipage}{0.48\linewidth}
    D)

    \includegraphics[width=\linewidth]{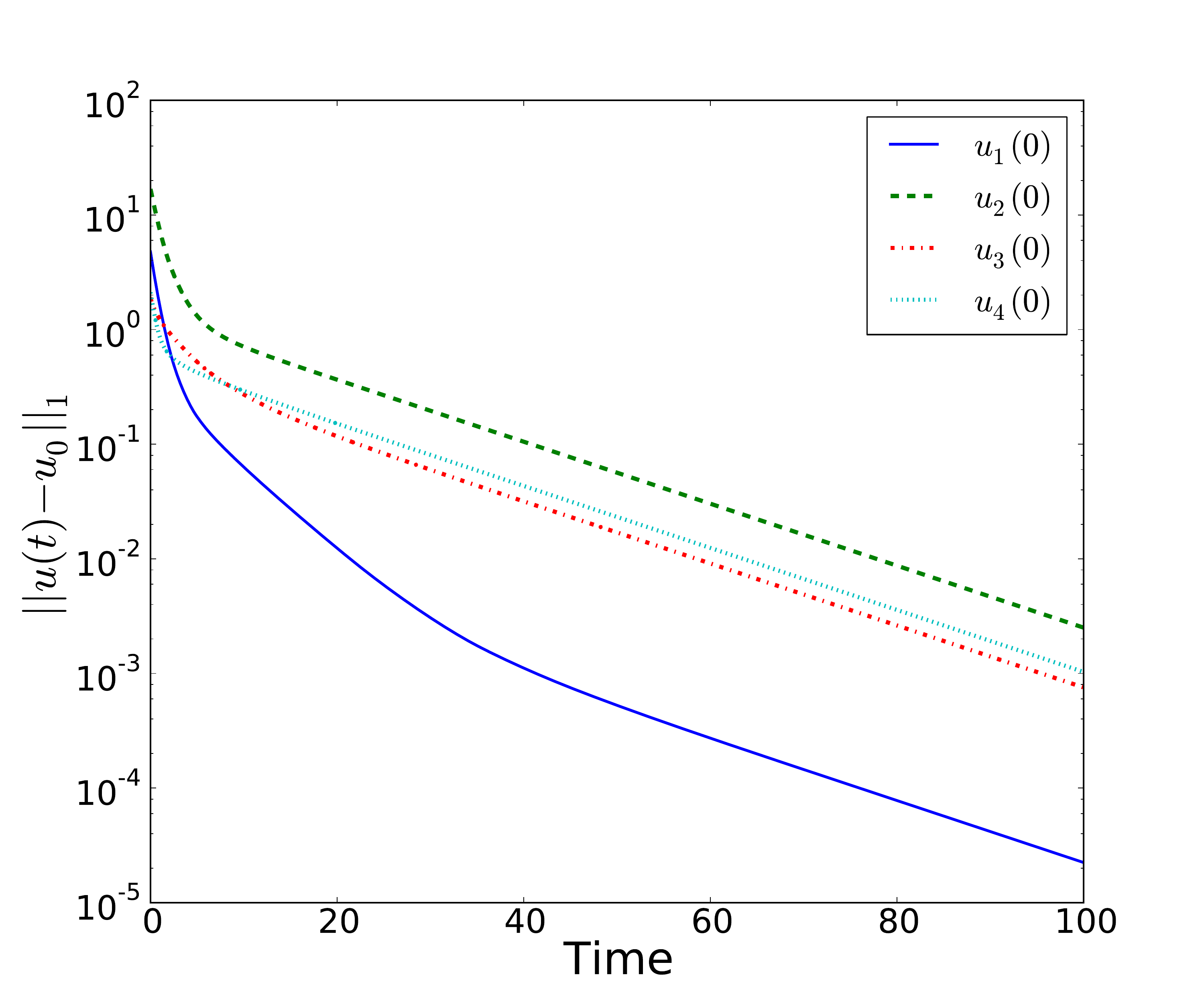}
  \end{minipage}
  \caption[The observed $\Lp^1$ distances from equilibrium as a function of
    time]{The observed $\Lp^1$ distances from equilibrium as a function of
    time are plotted here on a log scale. The four lines on each plot each
    correspond to the solution starting with a different initial condition. A)
    $\B_1$ and $\kappa_1$. B) $\B_1$ and $\kappa_2$. C) $\B_2$ and
    $\kappa_1$. D) $\B_2$ and $\kappa_2$.}\label{simulations}
\end{figure}

All of the simulations run converged (asymptotically) exponentially to the
equilibrium. The equilibrium was calculated by iterating $\op{A}^\prime$ as
outlined in the following section. Figure \ref{simulations} shows the $\Lp^1$
distance between the simulation result and the calculated equilibrium on a log
scale. The log scale was used to make the graphs legible and to show the
exponential convergence. Moreover, all of the initial conditions appear to
asymptotically converge at the same rate.  That gives evidence that solutions
are globally exponentially stable. The simulations provide evidence for
exponential stability with some constant determined solely by $\B$ and
$\kappa$.

\subsection{Perturbation Expansion}

The last approximate technique we will use is to study the perturbation
expanion of the zeroth-order branching equation. For this section, we will
again consider functions $u\in\Lp^1$. However, we will again appeal to Hilbert
space techniques when necessary.

Assume that the capping function can be written out as:
\begin{equation}
  \kappa(\theta)=c+\eps\phi(\theta)
\end{equation}
where $\phi$ is smooth, has integral zero, and reasonably small so that
$\eps\phi$ is close to zero. The equation of motion is thus:
\begin{equation}\label{pert-dyna}
  \dot{u}(\theta,t)=\frac{\Big(\B\star u\Big)(\theta,t)}
      {\int u(\omega,t)\dint\omega}-(c+\eps\phi(\theta))u(\omega,t)
\end{equation}
For the Cauchy problem with $u(\theta,0)=u^\star(\theta)$, we will look
for solutions of the form:
\begin{equation}
  u(\theta,t)=\sum_{j=0}^\infty\eps^ju_j(\theta,t)
\end{equation}
with the initial conditions $u_0(\theta,0)=u^\star$ and $u_j(\theta,0)=0$ for
all $j\geq1$. Showing that both $u_0$ and $u_1$ converge to the equilibribium
defined in equation \eqref{zero-eigen} provides some evidence for the stability
of the full equation.

First, we will calculate the first couple of terms of the equilibrium using
equation \eqref{zero-eigen}:
\begin{equation}\label{pert-eigen}
  \frac{\Big(\B\star w\Big)(\theta)}{1+\eps\phi(\theta)}=\mu w(\theta)
\end{equation}
where we have the two expansions:
\begin{equation}
  w(\theta)=\sum_{j=0}^\infty\eps^jw_j(\theta)
  \qquad
  \mu=\sum_{j=0}^\infty\eps^j\mu_j
\end{equation}
Expanding out the terms in \eqref{pert-eigen} gives:
\begin{align}\nonumber
  \left(\sum_{j=0}^\infty\frac{(-\eps\phi)^j}{c^{j+1}}\right)
  \left(\sum_{j=0}^\infty\eps^j\B\star w_j\right)
  &=\left(\sum_{j=0}^\infty\eps^j\mu_j\right)
  \left(\sum_{j=0}^\infty\eps^jw_j\right) \\\label{expand-pert-eigen}
  \sum_{j=0}^\infty\eps^j\sum_{k=0}^j\frac{(-\phi)^k}{c^{k+1}}\B\star w_{j-k}
  &=\sum_{j=0}^\infty\eps^j\sum_{k=0}^j\mu_jw_{j-k}
\end{align}
The first term ($\eps^0$) is simply the eigenproblem for the unperturbed
problem:
\begin{equation}
  \B\star w_0=c\mu_0w_0
\end{equation}
The only positive solution of the above is $\mu_0=\frac{1}{c}$ and $w_0$
constant. Since we know the equilibrium has the same integral as the
eigenvalue, we know that $w_0=\frac{1}{c2\pi}$. The $\eps^1$ term gives:
\begin{equation}
  \frac{\B\star w_1}{c}-\frac{\phi\B\star w_0}{c^2}=\mu_0w_1+\mu_1w_0
\end{equation}
Filling in the known quantities and rearranging gives:
\begin{equation}\label{fred}
  \B\star w_1-w_1=\frac{\phi}{c}+\frac{\mu_1}{2\pi}
\end{equation}
The left hand side is a self-adjoint Fredholm operator with nullspace spanned
by the constant function, $w_0$. By the Fredholm alternative, we know that
\eqref{fred} is solvable if and only if the right hand side is orthogonal to
$w_0$, i.e. $\phi(\theta)+\frac{\mu_1}{2\pi}$ has integral zero. To have
integral zero, we know that
$\mu_1=-\frac{2\pi}{c}\int\phi(\omega)\dint\omega=0$.

We can write out $w_1$ in terms of the Neumann series. By hypothesis, $\phi$ is
bounded, and therefore $\phi\in\Lp^2$.  We know that $\B\star$ has operator
norm strictly less than one on the space of functions orthogonal to $w_0$:
$\{w_0\}^\perp\subsetneq\Lp^2$, which implies the Neumann series converges in
$\Lp^2$ norm:
\begin{equation}
  w_1=\sum_{j=0}^\infty\op{B}^j\phi
\end{equation}
where $\op{B}u=\B\star u$ and $\op{B}^{j+1}u=\B\star\op{B}^ju$. The above
expansion implies that $\int w_1=0$. We know that $\int\op{B}u=\int u$ which
gives:
\begin{equation}
  \int\sum_{j=0}^n\Big(\op{B}^j\phi\Big)(\theta)\dint\theta
  =0
\end{equation}
Since norm convergence implies weak convergence, we can finish the proof by
observing:
\begin{equation}
  \lim_{n\to\infty}\int w_1-\int\sum_{j=0}^n\op{B}^j\phi
  =\lim_{n\to\infty}\left\langle w_1(\theta)
  -\sum_{j=0}^n\op{B}^j\phi,1\right\rangle=0
\end{equation}

We can now consider the dynamics. We can write out \eqref{pert-dyna} in terms
of our power series:
\begin{multline}
  \sum_{j=0}^\infty\eps^j\dot{u}_j(\theta,t)
  =\frac{1}{\sum_{j=0}^\infty\eps^j\int u_j(\omega,t)\dint\omega}
  \sum_{j=0}^\infty\eps^j\Big(\B\star u_j\Big)(\theta,t) \\
  -(c+\eps\phi(\theta))\sum_{j=0}^\infty\eps^ju(\theta,t)
\end{multline}
A formal treatment of the integrand could be considered, but without confidence
of convergence, that seems unnecessary. We are only calculating $u_0$ and
$u_1$, so we will ignore terms of $o(\eps^2)$ or higher in the integrand:
\begin{multline}\label{pert-solve}
  \sum_{j=0}^\infty\eps^j\dot{u}_j(\theta,t)
  =\left(\sum_{j=0}^\infty\frac{\left(-\eps\int u_1(\omega,t)
    \dint\omega\right)^j}
  {\left(\int u_0(\omega,t)\dint\omega\right)^{j+1}}\right)
  \left(\sum_{j=0}^\infty\eps^j\Big(\B\star u_j\Big)(\theta,t)\right) \\
  -\Big(c+\eps\phi(\theta)\Big)\sum_{j=0}^\infty\eps^ju_j(\theta,t)
\end{multline}
The above equation allows us to solve for the first two terms of the
perturbation expansion.

The equation for the first term $u_0(\theta,t)$ from the $\eps^0$ expansion is:
\begin{equation}\label{first-pert}
  \dot{u}_0(\theta,t)=
  \frac{\Big(\B\star u_0\Big)(\theta,t)}{\int u_0(\omega,t)\dint\omega}
  -u_0(\theta,t)
\end{equation}
We can explicitly solve for the time-dependent total density by observing:
\begin{equation}
  \frac{\partial}{\partial t}\int u(\omega,t)\dint\omega=
  1-\int u(\omega,t)\dint\omega
\end{equation}
Solving the above gives:
\begin{equation}\label{asympt-dens}
  \int u(\omega,t)\dint\omega=1+A\exp[-t]
\end{equation}
where $A=\int u^\star(\omega)\dint\omega$. Substituting in to \eqref{first-pert}
gives:
\begin{equation}
  \dot{u}_0(\theta,t)=\frac{\Big(\B\star u_0\Big)(\theta,t)}{1+A\exp[-t]}
  -u_0(\theta,t)
\end{equation}
The right hand side of the above clearly depends upon the denominator
$1+A\exp[-t]$ continuously in almost any operator topology. Since we are
primarily concerned with the asymptotic dynamics, it is sufficient to show that
the asymptotic equation converges:
\begin{equation}
  \dot{u}_0=\B\star u_0-u_0
\end{equation}
From the results in Theorem \ref{principal}, we know that the above equation
converges to a multiple of the principal eigenfunction $w_0$. Equation
\eqref{asympt-dens} implies that the total density is equal to $\frac{1}{c}$
and $u_0(\theta,t)\to w_0(\theta)$ in norm as was required.

The second term of the expansion $(\eps^1)$ is slightly more complicated:
\begin{equation}\label{second-pert}
  \dot{u}_1(\theta,t)=
  \frac{\Big(\B\star u_1\Big)(\theta,t)}{\int u_0(\omega,t)\dint\omega}
  -\frac{\int u_1(\omega,t)\dint\omega}
  {\left(\int u_0(\omega,t)\dint\omega\right)^2}\Big(\B\star u_0\Big)(\theta,t)
  -u_1(\theta,t)-\phi(\theta)u_0(\theta,t)
\end{equation}
Substituting in known quantities gives:
\begin{equation}
  \dot{u}_1=\frac{\B\star u_1}{1+A\exp[-t]}-\frac{\int u_1}{(1+A\exp[-t])^2}
  \B\star u_0-u_1-\phi u_0
\end{equation}
We can now solve for the integral of $u_1$:
\begin{align}\nonumber
  \frac{\partial}{\partial t}\int u_1&=
  \frac{\int\B\star u_1}{1+A\exp[-t]}-\frac{\int u_1}{1+A\exp[-t]}
  -\int u_1-\int \phi u_0 \\
  &=-\int u_1-\int \phi u_0
\end{align}
That gives the explicit solution:
\begin{equation}
  \int u_1(\omega,t)\dint\omega=-\exp[-t]
  \int\limits_0^t\exp[s]\int\phi(\omega)u_0(\omega,t)\dint\omega\dint s
\end{equation}
Weak convergence of $u_0\to w_0$ is sufficient to show that $\int u_1\to-\int
\phi$. Substituting the asymptotic forms into \eqref{second-pert} similar to
the first term gives:
\begin{equation}\label{second-asympt}
  \dot{u}_1=\B\star u_1-u_1+\int\phi-\frac{\phi}{2\pi}
\end{equation}
asymptotically. Finally, to show convergence, we can write $u_1$ as:
\begin{equation}
  u_1(\theta,t)=w_1(\theta)+\epsilon(\theta,t)
\end{equation}
where $\epsilon(\theta,0)=-w_1(\theta)$. Putting that form into
\eqref{second-asympt} gives:
\begin{equation}
  \dot{\epsilon}=\B\star\epsilon-\epsilon
\end{equation}
Inspection of \eqref{fred} shows that $w_1(\theta)$ is bounded and is therefore
in $\Lp^2$. Since $\B\star u-u$ has the Fourier eigenpairs
$\{(\gamma_n,\frac{1}{\sqrt{2\pi}}\exp[-in\theta])\}_{n\in\Nat}$ as an
orthonormal basis, we can write the $\Lp^2$ norm of $\eps$ as:
\begin{equation}
  \|\epsilon(\theta,t)\|_2^2=\sum_{n\in\Nat}\exp[\gamma_nt]|\hat{w}_1(n)|^2
\end{equation}
where $\hat{w}_1(n)$ is the $n$'th Fourier component of $w_1(\theta)$. We know
that $\gamma_n<0$ for all $n>0$\footnote{the $\gamma_n$'s are real since $\B$ is
  symmetric.}  and $\sup_{n\geq1}\gamma_n<0$. Combining those facts with the fact
that $\hat{w}_1(0)=0$ because $\int w_1=0$ gives:
\begin{equation}
  \|\epsilon(\theta,t)\|_2^2=\sum_{n\geq1}\exp[\gamma_nt]|\hat{w}_1(n)|^2
  \leq\exp\left[\left(\sup_{n\geq1}\gamma_n\right)t\right]
  \sum_{n\geq1}|\hat{w}_1(n)|^2
\end{equation}
That implies that $u_1(\theta,t)\to w_1(\theta)$ asymptotically exponentially.

We have now shown that the first two components of the perturbation expansion
of \eqref{pert-dyna} converge to an equilibrium. Calculating further terms
would not provide any additional insight as the estimate from equation
\eqref{expand-pert-eigen} gives a worse estimate after the first order. Recall
from the proof of Proposition \ref{exist-prop} that the integral of the
equilibrium $w$ has to be equal to the eigenvalue $\mu$. However, it is easy to
observe that the argument showing that $\int w_1=$ holds for all $w_j$, which
implies that $\int \sum_{j=0}^k\eps^jw_j = \frac{1}{c}$ for all $k$. However,
$\sum_{j=0}^k\eps^j\mu_j\neq\mu_0=\frac{1}{c}$ for all $k\geq2$. We can show
that by caclulating the second term in the perturbation expansion.

Using equation \eqref{expand-pert-eigen}, we can gather the $\eps^2$ terms:
\begin{equation}
  \frac{1}{c}\B\star w_2-\frac{\phi}{c^2}\B\star w_1
  +\frac{\phi^2}{c^3}\B\star w_0
  =\mu_0w_2+\mu_1w_1+\mu_2w_0
\end{equation}
Observe that $\B\star w_1=w_1+\frac{\phi}{c}$, and filling in other known
quantities gives the relation:
\begin{align}\nonumber
  \frac{1}{c}\left(\B\star w_2-w_2\right)&=\frac{\phi}{c^2}\B\star w_1
  -\frac{\phi^2}{c^3}+\mu_2w_0 \\\nonumber
  \B\star w_2-w_2&=\frac{\phi^2}{c^2}-\frac{\phi^2}{c^2}+\frac{\phi w_1}{c}
  +\frac{\mu_2}{2\pi} \\
  \B\star w_2-w_2&=\frac{\phi w_1}{c}+\frac{\mu_2}{2\pi}
\end{align}
It is obvious that $\mu_2=0\iff\int\phi w_1=0$. However, we can observe that:
\begin{align}\nonumber
  \int \frac{\phi w_1}{c}=\int \phi\left(\op{B}-\op{I}\right)^{-1}\frac{\phi}{c}
  &=\left\langle\phi,\left(\op{B}-\op{I}\right)^{-1}\frac{\phi}{c}\right\rangle
  \\\nonumber
  &=\Big\langle(\op{B}-\op{I})w_1,
  (\op{B}-\op{I})(\op{B}-\op{I})^{-1}w_1\Big\rangle \\
  &=\Big\langle(\op{B}-\op{I})w_1,w_1\Big\rangle
\end{align}
by observing that $(\op{B}-\op{I})w_1=\frac{\phi}{c}$. We can use the
Cauchy-Schwarz inequality to show:
\begin{equation}
  \langle\op{B}w_1,w_1\rangle-\langle w_1,w_1\rangle
  \leq\|\op{B}w_1\|_2\|w_1\|_2-\|w_1\|_2^2<0
\end{equation}
The last inequality comes from the fact that
$\|\op{B}f\|_2\leq\|\op{B}\|\|f\|_2=\|f\|_2$ and that equality holds if and
only if $f\equiv C$.

An explicit example can give insight into the non-convergence discussed in the
previous paragraph. Assume $\B=\frac{1}{2\pi}$ and $\kappa = 1+\eps\phi$ where
$|\eps\phi|<1$. It is easy to see that the principal solution to
\eqref{pert-eigen} is $w=\frac{1}{1+\eps\phi}$ and
$\mu=\int\frac{1}{1+\eps\phi}$. Moreover, we have the exact expansion in
$\eps$:
\begin{equation}
  w=\sum_{j=0}^\infty\eps^jw_j=\sum_{j=0}^\infty\eps^2(-\phi)^j
\end{equation}
By our hypothesis $|\eps\phi|<1$, we know the above converges. It is easy to
see that:
\begin{equation}
  (\op{B}-\op{I})^{-1}\phi=-\sum_{j=0}^\infty\op{B}^j\phi=-\phi
\end{equation}
which is the exact solution for the first term of the asymptotic expansion.
However, the second term gives an incorrect solution:
\begin{align}\nonumber
  (\op{B}-\op{I})^{-1}\left(\phi^2-\int\phi^2\right)&=
  \sum_{j=0}^\infty\op{B}\left(\int\phi^2-\phi^2\right) \\
  &=\int\phi^2(\omega)\dint\omega-\phi^2(\theta)\neq\phi^2(\theta)
\end{align}
Further terms would show the same difficulty.

\section{Conclusions}

The results presented here provide a reference point for future work on the
orientation patterns of branching actin networks. Given the current models, it
would seem that factors external to the actin network determine the distinct
orientation patterns observed in experiment. In experiments where the cell goes
through protrusion/retraction cycles, different orientation patterns are
observed at different points in the cycle \cite{Giannone2004}. One possible
factor leading to the varying patterns may be how the network deals with the
load from the cell membrane \cite{Smith2013}.

In the numerical results, only $\B_1$ and $\kappa_1$ were physically based, but
Figure \ref{equib-dens-fig} show that there is a complex interplay between the
capping and branching functions to result in the equilibrium distribution.  The
stability seen in Figure \ref{simulations} may indicate why the orientation
patterns seen in experiment have been so stable. The numerical results also
reinfoce the analytical result that there is only one stable orientation
pattern, in contrast with the multiple equilibria hypothesis in Weichsel and
Schwarz \cite{Weichsel2010}.

\bibliographystyle{siam}
\bibliography{orient}

\appendix
\section{Methods}

All simulations were run using the Numpy \cite{Ascher1999, Oliphant2006,
  Oliphant2007, Scipy} extension to Python. The equations of motion were
integrated using a simple Euler method. The simulations were run for 100 time
units with a time step of 0.01 time units. The circle was discretized using
$2^{11}$ equally spaced points from $-\pi$ to $\pi$ (the power of 2 was used to
speed up the fast Fourier transform). All integrals were taken using the
trapezoidal method included in Numpy. The convolution was performed by taking
the real fast Fourier transform of the branching kernel $\B$ and the density
$u(\theta,t)$, multiplying, and taking the inverse real fast Fourier transform.
The built-in Numpy fast convolution method was not used because that method
pads the two convolved functions with extra zeros to prevent circular
convolution, but the equations used here explicitly call for the circular
convolution. The convolution was normalized by dividing $\B$ by the integral of
the convolution of $\B$ with the constant function
$\B\star\frac{1}{2\pi}$. Finally, the total branching rate was normalized by
integrating the density at the previous time step, i.e.:
\begin{equation}
  u(\theta,t+\dint t) = \dint t\left(
  \frac{\Big(\B\star u\Big)(\theta,t)}{\int u(\theta,t)\dint\theta}
  -\kappa(\theta)u(\theta,t)\right)
\end{equation}
Using the integral from the previous time step and not a more sophisticated
prediction-correction methods is justified by the following inequality:
\begin{align}\nonumber
  |u(\theta,t+\dint t)-u(\theta,t)|&\leq \dint t\left(
  \int\Big(\B\star u\Big)(\theta,t)\dint\theta
  +\int\kappa(\theta)u(\theta,t)\dint\theta\right) \\
  &\leq\Big(1+\sup_{\theta\in S^1}\kappa(\theta)\Big)\dint t\|u(\theta,t)\|_1
  \leq3\dint t\|u(\theta,t)\|_1
\end{align}
Since simulations remain bounded, the bound above can be made uniform.

The equilibrium was calculated a priori by iterating the equilibrium operator
$\op{A}^\prime$. Explicitly, a sequence of functions was generated by:
\begin{equation}
  v_{n+1}(\theta)=\frac{1}{\int v_n(\theta)\dint\theta}\op{A}^\prime v_n(\theta)
\end{equation}
where the discretization and convolution were performed exactly as above and
$v_0=\frac{1}{2\pi}$. The theoretical justification for using this method is
outlined in Section \ref{calc-equib}. For three combinations of $\B$ and
$\kappa$, $\|v_{10^4}-v_{2\times10^4}\|_1$ was less than numerical
precision. For the combination of $\B_1$ and $\kappa_2$, $v_{10^4}$ was not
sufficiently converged, so $v_{10^6}$ was used. That decision was based on the
condition that $\|v_{10^6}-v_{2\times10^6}\|_1$ was less than numerical
precision. That level of precision was used to ensure that the convergence
could be seen even when $\|u(\theta,t)-v_n\|_1\leq10^{-6}-10^{-10}$. Even
$10^6$ iterations of $\op{A}^\prime$ only took several minutes on a standard
Linux desktop system concurrently running other programs, a number that could
be reduced with further optimization.

At each time-step, the $\Lp^1$ distance between the state of the system
$u(\theta,t)$ and the equilibrium was calculated. That quantity is plotted as a
function of time in Figure \ref{simulations}.

\section{Calculating Equilibrium Distributions}\label{calc-equib}

The results in this article justify the use of a naive eigenvalue calculation
algorithm. Calculating eigenvalues of integral equations is a non-trivial
problem. Investigation into open questions regarding the generality of
orientation patterns across branching and capping patterns, such as in
\cite{Quint2011}, may require calculating the equilibrium solution to equations
like the ones analyzed here. Moreover, in the previous section, equilibrium
distributions were calculated a priori to show that simulations converged. The
method below has proven to be very efficient for the work in this article.

We will consider calculating the leading eigenvalue of the equilibrium operator
for zeroth-order branching, $\op{A}^\prime$. As $\op{A}^\prime$ is self-adjoint
and compact, we can represent its range as the sum of eigenfunctions. We can
explicitly calculate the $n$-th iterate of $\op{A}^\prime$ in terms of its
(orthonormal) eigenfunctions:
\begin{equation}
  \Big(\op{A}^{\prime}\Big)^nv
  =\Big(\op{A}^{\prime}\Big)^n\Big(\sum_jc_ju_j\Big)
  =\sum_j\mu_j^nc_ju_j
\end{equation}
where $c_j=\langle v,u_j\rangle_\kappa$. We know that $\mu_0$ is equal to the
spectral radius from the Krein-Rutman theorem as in the proof of
Theorem \ref{principal}. The proof also implies the eigenvalue is
simple. Finally, since $\op{A}^\prime$ is a compact operator, we know that
there must be a spectral gap, i.e. $\mu_0-|\mu_j|>c>0$ for some $c$ and all
$j\neq0$.

All that remains necessary to show that the above iteration converges to the
positive equilibrium is to show that $c_j=\langle v,u_0\rangle_\kappa\neq0$. If
$v$ equals the constant function, that condition is fulfilled. However, a
stronger result is possible. By a result in \cite{DeMasi1998}, we know that
$u_0(\theta)>0$. The continuity of $u_0$ gives that $\inf u_0>0$. Thus, we have
the inequality:
\begin{equation}
  \langle v,u_0\rangle_\kappa\geq\inf_{\theta\in S^1}(u_0(\theta)\kappa(\theta))
  \int\limits_{S^1}v(\theta)\dint\theta>0
\end{equation}
which implies that the iterative procedure will converge.

\end{document}